\setlist[enumerate]{leftmargin=*}
\newtheorem{theorem}{Theorem} 
\newtheorem{definition}{Definition}
\newtheorem{lemma}{Lemma}
\newtheorem{example}{Example}
\newcommand{\NP}{NP}
\newcommand{\NEXP}{NEXPTIME}
\newcommand{\terms}{\mathcal{T}_{\varphi}}
\def\eod {{\unskip\nobreak\hfil\penalty50
\hskip2em\hbox{}\nobreak\hfil \rule{2.57mm}{2,57mm}\hspace{.71pt}
\parfillskip=0pt \finalhyphendemerits=0 \par \medskip}}
\author{
Domenico Cantone\thanks{Department of Mathematics and Computer Science, University of Catania, Italy. Email: cantone@dmi.unict.it}$\:$,
Alfio Giarlotta\thanks{Department of Economics and Business, University of Catania, Italy. Email: giarlott@unict.it}$\:$,
Stephen Watson\thanks{Department of Mathematics and Statistics, York University, Toronto, Canada. Email: watson@mathstat.yorku.ca}}
\title{\bf The Satisfiability Problem for Boolean Set Theory\\ with a Choice Correspondence}
\newcommand{\defAs}{\coloneqq}
\newcommand{\pow}{\mathrm{Pow}}
\newcommand{\myprec}{\mathrel{\leqslant\hspace{-5.8pt}\raisebox{1pt}{$\cdot$}}}
\newcommand{\BSTC}{\mathsf{BSTC}}
\newcommand{\model}{\ensuremath{\mbox{\boldmath $\mathcal{M}$}}\xspace}
\newcommand{\M}[1]{#1^{\scriptscriptstyle M}}
\newcommand{\Mbeta}[1]{#1^{\scriptscriptstyle M_{\beta}}}
\newcommand{\Malpha}[1]{#1^{\scriptscriptstyle M_{\alpha}}}
\newcommand{\choice}{\mathtt{c}}
\newcommand{\true}{\ensuremath{\mbox{$\mathsf{1}$}}\xspace}
\newcommand{\false}{\ensuremath{\mbox{$\mathsf{0}$}}\xspace}
\def\F{\mathcal{F}}
\begin{document}

\maketitle

\begin{abstract}
\noindent 
Given a set $U$ of alternatives, a choice (correspondence) on $U$ is a contractive map $c$ defined on a family $\Omega$ of nonempty subsets of $U$. 
Semantically, a choice $c$ associates to each menu $A \in \Omega$ a nonempty subset $c(A) \subseteq A$ comprising all elements of $A$ that are deemed selectable by an agent.
A choice on $U$ is total if its domain is the powerset of $U$ minus the empty set, and partial otherwise.
According to the theory of revealed preferences, a choice is rationalizable if it can be retrieved from a binary relation on $U$ by taking all maximal elements of each menu.   
It is well-known that rationalizable choices are characterized by the satisfaction of suitable axioms of consistency, which codify logical rules of selection within menus. 
For instance, \textsf{WARP} (Weak Axiom of Revealed Preference) characterizes choices rationalizable by a transitive relation.
Here we study the satisfiability problem for unquantified formulae of an elementary fragment of set theory involving a choice function symbol $\choice$, the Boolean set operators and the singleton, the equality and inclusion predicates, and the propositional connectives. 
In particular, we consider the cases in which the interpretation of $\choice$ satisfies any combination of two specific axioms of consistency, whose conjunction is equivalent to \textsf{WARP}. 
In two cases we prove that the related satisfiability problem is \NP-complete, whereas in the remaining cases we obtain \NP-completeness under the additional assumption that the number of choice terms is constant.

\medskip

\noindent \textbf{Keywords:} Decidability; NP-completeness; choice; axioms of choice consistency; \textsf{WARP}. 

%

\end{abstract}


\section{Introduction} \label{SECT:intro}

In this paper we examine the decidability of the satisfiability problem connected to \textit{rational choice theory}, which is a framework to model social and economic behavior.  
A choice on a set $U$ of alternatives is a correspondence $B \mapsto c(B)$ associating to ``feasible menus" $B \subseteq U$ nonempty ``choice sets" $c(B) \subseteq B$.
This choice can be either \textit{total} (or \textit{full}) -- i.e, defined for all nonempty subsets of the ground set $U$ of alternatives -- or \textit{partial} -- i.e., defined only for suitable subsets of $U$.

According to the \textit{Theory of Revealed Preferences} pioneered by the economist Paul Samuelson~\cite{Sam38}, 
preferences of consumers can be derived from their purchasing habits: in a nutshell, an agent's choice behavior is observed, and her underlying preference structure is inferred.
The preference revealed by a primitive choice is typically modeled by a binary relation on $U$.
The asymmetric part of this relation is informative of a ``strict revealed preference" of an item over another one, whereas its symmetric part codifies a ``revealed similarity" of items.
Then a choice is said to be \textit{rationalizable} when the observed behavior can be univocally retrieved by maximizing the relation of revealed preference.

Since the seminal paper of Samuelson, a lot of attention has been devoted to notions of rationality within the framework of choice theory: see, among the many contributions to the topic, the classical papers~\cite{Hou50,Arr59,Ric66,Han68,Sen71}.  
(See also the book~\cite{AleBouMon07} for the analysis of the links among the theories of choice, preference, and utility. For a very recent contribution witnessing the fervent research on the topic, see~\cite{ChaEchShm17}.) 
Classically, the rationality of an observed choice behavior is connected to the satisfaction of suitable \textit{axioms of choice consistency}: these are rules of selections of items within menus, codified by means of sentences of second-order monadic logic, universally quantified over menus.
Among the several axioms introduced in the specialized literature, let us recall the following:

$\bullet$ \emph{standard contraction consistency} $(\alpha)$, introduced by Chernoff~\cite{Che54};

$\bullet$ \textit{standard expansion consistency} $(\gamma)$, and \textit{binary expansion consistency} $(\beta)$, both due to Sen~\cite{Sen71};

$\bullet$ \textit{the weak axiom of revealed preference} (\textsf{WARP}), due to Samuelson~\cite{Sam38}.\\
It is well-known that, under suitable assumptions on the domain, a choice is rationalizable if and only if the two standard axioms of consistency  $(\alpha)$ and $(\gamma)$ hold. 
Further, the rationalizing preference satisfies the property of transitivity if and only if axioms $(\alpha)$ and $(\beta)$ hold if and only if \textsf{WARP} holds: in this case, we speak of a \textit{transitively rationalizable} choice. 
Section~\ref{SECT:preliminaries} provides the background to choice theory.

Although the mathematical economics literature on the topic is quite large, there are no contributions which deal with related decision procedures in choice theory.
In this paper we start filling this gap.
Specifically, we study the satisfiability problem for unquantified formulae of an elementary fragment of set theory (denoted $\BSTC$) involving a choice function symbol $\choice$, the Boolean set operators $\cup$, $\cap$, $\setminus$ and the singleton $\{\cdot\}$, the predicates equality $=$ and inclusion $\subseteq$, and the propositional connectives $\land$, $\lor$, $\neg$, $\implies$, etc. 
Here we consider the cases in which the interpretation of $\choice$ is subject to any combination of the axioms of consistency $(\alpha)$ and $(\beta)$, whose conjunction is equivalent to \textsf{WARP}. 
In two cases we prove that the related satisfiability problem is \NP-complete, whereas in the remaining cases we obtain \NP-completeness only under the additional assumption that the number of choice terms is constant. 

By depriving the $\BSTC$-language of the choice function symbol $\choice$, we obtain the fragment \textsf{2LSS} (here denoted $\BSTC^{-}$) whose decidability was known since the birth of \textit{Computable Set Theory} in the late 70's. 
In Section~\ref{appendixDecProc} we rediscover such result as a by-product of the solution to the satisfiability problem of $\BSTC$ under the \textsf{WARP}-semantics: the latter is based on a novel term-oriented non-clausal approach. 
The reader can find extensive information on Computable Set Theory in the monographs \cite{CanFerOmo89a,CanOmoPol01,SchCanOmo11,CanUrs17}.

For our purposes, it will be relevant to solve the following \emph{lifting problem}: Given a partial choice satisfying some axioms of consistency, can we suitably characterize whether it is extendable to a total choice satisfying the same axioms? 
The lifting problem for the various combinations of axioms $(\alpha)$ and $(\beta)$ is addressed in depth in Section~\ref{SECT:liftings}.
In particular, in the case of finite choice correspondences, our characterizations turn out to be effective and, with only one exception, expressible in the same $\BSTC$-language. 
This facilitates the design of effective procedures for the solution of the satisfiability problems of our concern. 
The syntax and semantics of the $\BSTC$-language, as well as the solutions of the satisfiability problem for $\BSTC$-formulae under the various combinations of axioms $(\alpha)$ and $(\beta)$ are presented in Section~\ref{SECT:satProb}. 
Finally, in Section~\ref{SECT:Conclusions}, we draw our conclusions and hint at future developments.


\section{Preliminaries on choice theory} \label{SECT:preliminaries}

\newcommand{\powPlus}{\pow^{\hbox{\tiny{+}}}(U)}

Hereafter, we fix a nonempty set $U$ (the ``universe").
Let $\pow(U)$ be the family of all subsets of $U$, and $\powPlus$ the subfamily $\pow(U) \setminus \{\emptyset\}$.
The next definition collects some basic notions in choice theory.

\newcommand{\cPlus}{c^{\hbox{\tiny{+}}}}
\newcommand{\oc}{\overline{c}}
\newcommand{\ocPrime}{\overline{\cPlus}}

\begin{definition} \rm \label{DEF:preliminary deff on choice}
    Let $\Omega \subseteq \powPlus$ be nonempty.
    A map $f \colon \Omega \to \pow(U)$ is \textit{contractive} if $f(B) \subseteq B$ for each $B \in \Omega$. 
    A \textit{choice correspondence} on $U$ is a contractive map that is never empty-valued, i.e.,
    $$
    c \colon \Omega \to \powPlus \quad \hbox{such that} \quad c(B) \subseteq B \quad \hbox{for each} \;\; B \in \Omega\,.
    $$
    In this paper, we denote a choice correspondence on $U$ by $c \colon \Omega \rightrightarrows U$, and simply refer to it as a \textit{choice}. 
    The family $\Omega$ is the \textit{choice domain} of $c$, sets in $\Omega$ are \textit{(feasible) menus}, and elements of a menu are \textit{items}. 
    Further, we say that $c \colon \Omega \rightrightarrows U$ is \textit{total} (or \textit{full}) if $\Omega = \powPlus$, and \textit{partial} otherwise.
    The \textit{rejection map} associated to $c$ is the contractive function $\oc \colon \Omega \to \pow(U)$ defined by $\oc(B) := B \setminus c(B)$ for all $B \in \Omega$. \eod
\end{definition}

Given a choice $c\colon \Omega \rightrightarrows U$, the \textit{choice set} $c(B)$ of a menu $B$ collects the elements of $B$ that are deemed selectable by an economic agent.
Thus, in case $c(B)$ contains more than one element, the selection of a single element of $B$ is deferred to a later time, usually with a different procedure (according to additional information or ``subjective randomization", e.g., flipping a coin). 
Notice that the rejection map associated to a choice may fail to be a choice, since the rejection set of some menu can be empty.

%

The next definition recalls the classical notion of a rationalizable choice.

\begin{definition} \rm \label{DEF:rationalizable choice}
    A choice $c \colon \Omega \rightrightarrows U$ is \textit{rationalizable} (or \textit{binary}) if there exists a binary relation $\precsim$ on $U$ such that the equality\footnote{Recall that $\max_{\precsim} B = \{a \in B : (\nexists b \in B) (a \prec b)\}$, where $a \prec b$ means $a \precsim b$ and $\neg(b \precsim a)$.} $c(B) = \max_{\precsim} B$ holds for all menus $B \in \Omega$. \eod
\end{definition}

The revealed preference theory approach postulates that preferences can be derived from choices. 
The preference revealed by a primitive choice is modeled by a suitable binary relation on the set of alternatives.
Then a choice is rationalizable whenever the observed behavior can be fully explained (i.e., retrieved) by constructing a binary relation of revealed preference.

The rationalizability of choice is traditionally connected to the satisfaction of suitable \textit{axioms of choice consistency}. 
These axioms codify rules of coherent behavior of an economic agent.
Among the several axioms that are considered in the literature, the following are relevant to our analysis (a universal quantification on all the involved menus is implicit):
\smallskip

\noindent 
\begin{tabular}{lll}
  \textbf{axiom $(\alpha)$} [\emph{standard contraction consistency}]: & 
      $A \subseteq B \;\; \Longrightarrow \;\; A \cap c(B) \subseteq c(A)$\\[.1cm]
  \textbf{axiom $(\gamma)$} [\emph{standard expansion consistency}]: &
      $c(A) \cap c(B) \subseteq c(A \cup B)$\\[.1cm]
  \textbf{axiom $(\beta)$} [\emph{symmetric expansion consistency}]: &
      $\big(A \subseteq B \: \wedge \: c(A) \cap c(B) \neq \emptyset \big) \;\; \Longrightarrow \;\; c(A) \subseteq c(B)$\\[.1cm]
  \textbf{axiom $(\rho)$} [\emph{standard replacement consistency}]: &
      $c(A) \setminus c(A \cup B) \neq \emptyset \;\; \Longrightarrow \;\; B \cap c(A \cup B) \neq \emptyset$\\[.1cm]
  \textbf{\textsf{WARP}} [\emph{weak axiom of revealed preference}]: &
      $\big(A \subseteq B \: \wedge \: A \cap c(B) \neq \emptyset \big) \;\; \Longrightarrow \;\; c(A) = A \cap c(B)$.    
\end{tabular}\\

Axiom $(\alpha)$ was studied by Chernoff~\cite{Che54}, whereas axioms $(\gamma)$ and $(\beta)$ are due to Sen~\cite{Sen71}.
\textsf{WARP} was introduced by Samuelson in~\cite{Sam38}.
Axiom $(\rho)$ has been recently introduced in~\cite{CanGiaGreWat16}, in connection to the transitive structure of the relation of revealed preference.

Upon reformulating these properties in terms of items, their semantics becomes clear.
Chernoff's axiom $(\alpha)$ states that any item selected from a menu $B$ is still selected from any submenu $A \subseteq B$ containing it. 
Sen's axiom $(\gamma)$ says that any item selected from two menus $A$ and $B$ is also selected from the menu $A \cup B$ (if feasible).
The expansion axiom $(\beta)$ can be equivalently written as follows: if $A \subseteq B$, $x,y \in c(A)$ and $y \in c(B)$, then $x \in c(B)$. 
In this form, $(\beta)$ says that if two items are selected from a menu $A$, then they are simultaneously either selected or rejected in any larger menu $B$.
Axiom $(\rho)$ can be equivalently written as follows: if $y \in c(B) \setminus c(B \cup \{x\})$, then $x \in c(B \cup \{x\})$. 
In this form, $(\rho)$ says that if an item $y$ is selected from a menu $B$ but not from the larger menu $B \cup \{x\}$, then the new item $x$ is selected from $B \cup \{x\}$. 
\textsf{WARP} summarizes features of contraction and expansion consistency in a single -- and rather strong, despite its name -- axiom, in fact it is equivalent to the conjunction of $(\alpha)$ and $(\beta)$~\cite{Sen71}. 

\section{Liftings}\label{SECT:liftings}

In this section we examine the ``lifting problem": this corresponds to finding necessary and sufficient conditions such that a partial choice satisfying some axioms of consistency can be extended to a total choice satisfying the same axioms. We shall exploit such conditions in the decision results to be presented in Section~\ref{SECT:satProb}.

The next definition makes the notion of lifting formal.
\newcommand\restrict[1]{\raisebox{-.5ex}{$|$}_{#1}}

\begin{definition} \rm \label{DEF:lifting}
    Let $c \colon \Omega \rightrightarrows U$ be a choice. 
    Given a nonempty set $\F$ of sentences of second-order monadic logic, we say that $c$ has the $\F$\textit{-lifting property} if there is a total choice $\cPlus \colon \powPlus \rightrightarrows U$ extending $c$ (i.e., $\cPlus \restrict{\Omega} = c$) and satisfying all formulae in $\F$.
    In this case, $\cPlus$ is called an $\F$\textit{-lifting} of $c$.
    (Of course, we are interested in cases such that $\F$ is a family of axioms of choice consistency.) 
    Whenever $\F$ is a single formula, we simplify notation and write, e.g., $(\alpha)$-lifting, $\textsf{WARP}$-lifting, etc.
    Similarly, we say that $c$ has the \textit{rational lifting property} if there is a total choice $\cPlus$ that is rational and extends $c$. 
\end{definition}

Notice that whenever $\F$ is a nonempty set of axioms of choice consistency (which are formulae in prenex normal form where all quantifiers are universal), if a choice has the $\F$-lifting property, then it automatically satisfies all axioms in $\F$. 
The same reasoning applies for the rational lifting property, since it is based on the existence of a binary relation of revealed preference that is fully informative of the choice.

On the other hand, it may happen that a partial choice satisfies some axioms in $\F$ but there is no lifting to a total choice satisfying the same axioms.
The next examples exhibit two instances of this kind.
(To simplify notation, we underline all items that are selected within a menu: for instance $\underline{x}\,y$ and $\underline{x}\,y\,\underline{z}$ stand for, respectively, $c(\{x,y\})=\{x\}$ and $c(\{x,y,z\})=\{x,z\}$. Obviously, we always have $\underline{x}$ for any $\{x\} \in \Omega$, so we can safely omit defining $c$ for singletons.)

\begin{example} \rm \label{EX:no lifting 1}
    Let $U= \{x,y,z\}$ and $\Omega =  \{B \subseteq U : 1 \leq \vert B \vert \leq 2\}$.
    Define a partial choice $c \colon \Omega \rightrightarrows U$ by $\underline{x}\,y$, $\underline{y}\,z$, and $x\,\underline{z}$. 
    This choice is rationalizable by the (cyclic) preference $\precsim$ defined by $x \prec y \prec z \prec x$. 
    However, $c$ does not admit any rational lifting to a total choice $\cPlus$, since we would have $\cPlus(U) = \max_{\precsim} U = \emptyset$.
\end{example}

\begin{example} \rm \label{EX:no lifting 2}
    Let $U= \{x,y,z,w\}$ and $\Omega = \powPlus \setminus \{\{x,w\},\{y,z\},\{y,w\},\{z,w\},U\}$.
    Define a partial choice $c \colon \Omega \rightrightarrows U$ by
    $
    \underline{x}\,\underline{y}\,,\; \underline{x}\,\underline{z}\,,\; x\,\underline{y}\,\underline{z}\,,\;
    \underline{x}\,y\,\underline{w}\,,\; \underline{x}\,\underline{z}\,w\,,\; \underline{y}\,z\,\underline{w}\,.
    $ 
    One can easily check that $c$ satisfies axiom $(\alpha)$ (but it fails to be rationalizable). 
    On the other hand, $c$ admits no $(\alpha)$-lifting, since $\cPlus(U) \neq \emptyset$ violates axiom $(\alpha)$ for any choice $\cPlus$ extending $c$ to the full menu $U$.  
\end{example}


\subsection{Lifting of axiom $(\alpha)$} \label{SECT:lifting alpha}

In this section we characterize the choices that are $(\alpha)$-liftable.
To that end, it is convenient to reformulate axiom $(\alpha)$ in terms of the monotonicity of the rejection map.
We need the following preliminary result, whose simple proof is omitted, and a technical definition.

\begin{lemma}\label{lemmaEquiv}
    Let $A \subseteq B \subseteq U$. 
    For any pair of sets $A',B' \subseteq U$, we have $\:A \cap B' \subseteq A' \;\Longleftrightarrow \; A \setminus A' \subseteq B \setminus B'\,.$
\end{lemma}

\begin{definition} \rm \label{DEF:relativization}      
    Let $c \colon \Omega \rightrightarrows U$ be a choice. 
    Given a menu $A \in \powPlus$, the \emph{relativized choice domain $\Omega_{A}$ w.r.t.\ $A$} is the collection of all submenus of $A$, that is, $B \in \Omega$ such that $B \subseteq A$; in symbols,    
    \begin{equation}\label{relativizedDomain}
        \Omega_{A} \defAs \{B \in \Omega : B \subseteq A\}\,.
    \end{equation}
    A set $\mathcal{B} \subseteq \Omega$ of menus is \emph{$\subseteq$-closed w.r.t.\ $\Omega$} if $B \in \mathcal{B}$, for every $B \in \Omega$ such that $B \subseteq \bigcup \mathcal{B}$.   
    \eod
\end{definition}


In view of Lemma~\ref{lemmaEquiv}, axiom $(\alpha)$ can be equivalently rewritten as follows: \begin{equation}\label{equivAlpha}
A \subseteq B \quad \Longrightarrow \quad \oc(A) \subseteq \oc(B)\,.
\end{equation}
In this form, axiom $(\alpha)$ just asserts that enlarging the set of alternatives may only cause the set of neglected members to grow. 
As announced, we have:

\begin{theorem} \label{THM:lifting alpha}
A partial choice $c \colon \Omega \rightrightarrows U$ has the $(\alpha)$-lifting property if and only if the following two conditions hold: 
\begin{enumerate}[label=\text{(\alph*)}, start=1]

\item\label{b} $A \subseteq B \;\; \Longrightarrow \;\; A \cap c(B) \subseteq c(A)$,~ for all $A,B \in \Omega\,$;

\item\label{c} $\bigcup \mathcal{B} \setminus \bigcup_{B \in \mathcal{B}} \overline{c}(B) \neq \emptyset$, for every $\emptyset \neq \mathcal{B} \subseteq \Omega$ such that $\mathcal{B}$ is $\subseteq$-closed w.r.t.\ $\Omega\,$.
\end{enumerate}
\end{theorem}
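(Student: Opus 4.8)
The plan is to prove both directions by producing, or analyzing, an explicit lifting. For the ``only if'' direction, suppose $\cPlus$ is an $(\alpha)$-lifting of $c$. Condition \ref{b} is immediate, since it is just the restriction of axiom $(\alpha)$ for $\cPlus$ to menus in $\Omega$ (and $\cPlus$ agrees with $c$ on $\Omega$). For condition \ref{c}, fix a nonempty $\subseteq$-closed $\mathcal{B} \subseteq \Omega$ and let $A := \bigcup \mathcal{B}$. Since $A \in \powPlus$, the value $\cPlus(A)$ is a nonempty subset of $A$; pick $x \in \cPlus(A)$. For each $B \in \mathcal{B}$ we have $B \subseteq A$, so the reformulated axiom~\eqref{equivAlpha} applied to $\cPlus$ gives $\ocPrime(B) \subseteq \ocPrime(A) = A \setminus \cPlus(A)$, hence $x \notin \ocPrime(B) = \oc(B)$ (using $\cPlus \restrict{\Omega} = c$). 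Therefore $x \in A \setminus \bigcup_{B \in \mathcal{B}} \oc(B)$, which is thus nonempty.

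For the ``if'' direction, assume \ref{b} and \ref{c} and construct a total $(\alpha)$-lifting explicitly. The natural candidate, guided by \eqref{equivAlpha}, is to define the rejection set of an arbitrary menu $A \in \powPlus$ by accumulating all rejections forced by submenus in $\Omega$: set
\[
  \ocPrime(A) \;:=\; \bigcup_{B \in \Omega_A} \oc(B)\,,
\]
and then $\cPlus(A) := A \setminus \ocPrime(A)$. One must check four things. First, $\cPlus$ extends $c$: for $A \in \Omega$ itself, monotonicity of the original rejection map (which is exactly \ref{b} via Lemma~\ref{lemmaEquiv}) gives $\oc(B) \subseteq \oc(A)$ for every $B \in \Omega_A$, so $\ocPrime(A) = \oc(A)$ and hence $\cPlus(A) = c(A)$. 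Second, $\cPlus$ is contractive by construction. Third, $\cPlus$ is never empty-valued: given $A \in \powPlus$, if $\Omega_A = \emptyset$ then $\ocPrime(A) = \emptyset$ and $\cPlus(A) = A \neq \emptyset$; otherwise apply condition \ref{c} to the family $\mathcal{B} := \Omega_A$, which is nonempty and $\subseteq$-closed w.r.t.\ $\Omega$ (any $B \in \Omega$ with $B \subseteq \bigcup \Omega_A \subseteq A$ lies in $\Omega_A$), obtaining $\bigcup \Omega_A \setminus \bigcup_{B \in \Omega_A}\oc(B) \neq \emptyset$; since $\bigcup \Omega_A \subseteq A$, this element also witnesses $\cPlus(A) = A \setminus \ocPrime(A) \neq \emptyset$. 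Fourth, $\cPlus$ satisfies axiom $(\alpha)$: if $A \subseteq A'$ in $\powPlus$, then $\Omega_A \subseteq \Omega_{A'}$, so $\ocPrime(A) \subseteq \ocPrime(A')$, which is precisely \eqref{equivAlpha} for $\cPlus$, equivalent to $(\alpha)$ by Lemma~\ref{lemmaEquiv}.

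The main obstacle is the non-emptiness of $\cPlus$ on menus $A \notin \Omega$ (the third point above), and this is exactly what condition \ref{c} is engineered to handle; the subtlety is recognizing that one must apply \ref{c} not to an arbitrary family but to $\mathcal{B} = \Omega_A$, and checking that this family is $\subseteq$-closed so that \ref{c} is applicable. Everything else is bookkeeping with Lemma~\ref{lemmaEquiv} and the reformulation~\eqref{equivAlpha}. A final remark worth including: condition \ref{c} already implies, by taking $\mathcal{B}$ to be a $\subseteq$-closure of a single menu, that $c$ itself satisfies $(\alpha)$ is not automatic — rather \ref{b} is the statement that $c$ satisfies $(\alpha)$ among menus of $\Omega$, and \ref{c} is the genuinely extra ``global'' obstruction illustrated by Example~\ref{EX:no lifting 2}.
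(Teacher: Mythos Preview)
Your proof is correct and follows essentially the same route as the paper's: the same explicit definition $\cPlus(A) = A \setminus \bigcup_{B \in \Omega_A} \oc(B)$, the same use of Lemma~\ref{lemmaEquiv}/\eqref{equivAlpha} to verify extension and axiom $(\alpha)$, and the same application of condition~\ref{c} to $\mathcal{B} = \Omega_A$ (checking it is $\subseteq$-closed) for non-emptiness. The only differences are cosmetic---you argue non-emptiness directly where the paper argues by contradiction, and you phrase the necessity argument elementwise---and your closing remark, while a useful gloss, is a bit garbled and could be tightened or dropped.
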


\begin{proof} 
For necessity, assume that $c \colon \Omega \rightrightarrows U$ can be extended to a total choice $\cPlus$ on $U$ satisfying axiom $(\alpha)$, and let $\oc$ be the associated rejection map of $c$.
Since $c = \cPlus\restrict{\Omega}$, condition~\ref{b} follows immediately from axiom $(\alpha)$ for $\cPlus$. 
To prove that $c$ satisfies condition~\ref{c} as well, let $\mathcal{B}$ be a nonempty $\subseteq$-closed subset of $\Omega$.
By the equivalent formulation (\ref{equivAlpha}) of axiom $(\alpha)$, we obtain $\ocPrime(B) \subseteq \ocPrime(\bigcup \mathcal{B})$ for every $B \in \mathcal{B}$, where $\ocPrime$ is the associated rejection map of $\oc$. Hence
\[  \bigcup_{B \in \mathcal{B}} \ocPrime(B) \; \subseteq \; \ocPrime\big(\bigcup \mathcal{B}\big) \; = \; \bigcup \mathcal{B} \setminus \cPlus\big(\bigcup \mathcal{B}\big)
\]
holds. 
It follows that
\[  \emptyset \; \neq \; \cPlus\big(\bigcup \mathcal{B}\big) \; \subseteq \; \bigcup \mathcal{B} \setminus \bigcup\nolimits_{B \in \mathcal{B}} \ocPrime(B) \; = \; \bigcup \mathcal{B} \setminus \bigcup\nolimits_{B \in \mathcal{B}} \oc(B)\,,
\]
thus showing that \ref{c} holds.
This completes the proof of necessity.

\smallskip

For sufficiency, assume that \ref{b} and \ref{c} hold for the choice $c \colon \Omega \rightrightarrows U$.
For each $A \in \powPlus$, define
\[    \cPlus(A) \: \defAs \: A \setminus \bigcup\nolimits_{B \in \Omega_{A}} \oc(B)\,,
\]
where we recall that $\Omega_{C}$ is the relativized choice domain w.r.t.\ to $C$ (cf.\ Definition~\ref{DEF:relativization}).
In what follows we prove that the map $\cPlus \colon \powPlus \to \pow(U)$ is a well-defined choice, which extends $c$ and satisfies axiom $(\alpha)$.

Since the map $\cPlus$ is obviously contractive by definition, to prove that it is a well-defined choice it suffices to show that it is never empty-valued.
Toward a contradiction, assume that $\cPlus(A) = \emptyset$ for some $A \in \powPlus$.
The definition of $\cPlus$ readily yields
$
  A \; = \; \bigcup\nolimits_{B \in \Omega_{A}} \oc(B) \; \subseteq \; \bigcup \Omega_{A} \; \subseteq \; A\,,
$
which implies $\Omega_{A} \neq \emptyset$ and $\bigcup_{B \in \Omega_{A}} \oc(B) = \bigcup \Omega_{A}$, since $\Omega_{A}$ is $\subseteq$-closed w.r.t.\ $\Omega$.
However, this contradicts \ref{c}. 
Thus, $\cPlus \colon \powPlus \rightrightarrows U$ is a well-defined (total) choice. 
 
Next, we show that $\cPlus$ extends $c$. 
Let $B \in \Omega$. 
Plainly, we have $B \in \Omega_{B}$ and $A \subseteq B$ for each $A \in \Omega_{B}$.
Property \ref{b} yields $A \cap c(B) \subseteq c(A)$, and so $\oc(A) \subseteq \oc(B)$ by Lemma~\ref{lemmaEquiv}. 
Thus, we obtain 
\[\cPlus(B) \: = \: B \setminus \bigcup\nolimits_{A \in \Omega_{B}} \oc(A) \: = \: B \setminus \oc(B) \: = \: c(B)\,,
\]
which proves the claim.

Finally, we show that $\cPlus$ satisfies $(\alpha)$. 
Let $\emptyset \neq B \subseteq C \subseteq U$. 
Since $B \subseteq C$ and $\Omega_{B} \subseteq \Omega_{C}$, we have:\\
\[
B \cap \cPlus(C)  = \; B \cap \left(C \setminus \bigcup\nolimits_{A \in \Omega_{C}} \oc(A)\right) = \; B \setminus \bigcup\nolimits_{A \in \Omega_{C}} \oc(A)
\subseteq \; B \setminus \bigcup\nolimits_{A \in \Omega_{B}}
= \; \cPlus(B)\,.
\]
This proves that axiom $(\alpha)$ holds for $\cPlus$, and the proof is complete.  
\end{proof}

\subsection{Lifting of axiom $(\beta)$} \label{SECT:lifting beta}

Here we prove that any partial choice satisfying axiom $(\beta)$ can be always lifted to a total choice still satisfying axiom $(\beta)$. 
To that end, we need the notion of the \emph{intersection graph} associated to a family of sets $\mathcal{S}$: this is the undirected graph whose nodes are the sets belonging to $\mathcal{S}$, and whose edges are the pairs of distinct intersecting sets $B,B' \in \mathcal{S}$ (i.e., such that $B \cap B' \neq \emptyset$).

\begin{theorem}\label{THM:lifting beta}
A partial choice has the $(\beta)$-lifting property if and only it satisfies axiom $(\beta)$.
\end{theorem}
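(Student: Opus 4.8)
The forward implication is free: by the remark following Definition~\ref{DEF:lifting}, any choice with the $\F$-lifting property for a set $\F$ of consistency axioms automatically satisfies every axiom of $\F$, since these are universally quantified over menus. So the whole content is the converse, and I would prove it by extending the choice domain one menu at a time, through a scaffolding based on Zorn's Lemma. Concretely, consider the poset of all choices $d\colon\Delta\rightrightarrows U$ with $\Omega\subseteq\Delta\subseteq\powPlus$, $d\restrict{\Omega}=c$, and $d$ satisfying $(\beta)$, ordered by extension. It is nonempty (it contains $c$, which satisfies $(\beta)$ by hypothesis) and every chain has an upper bound --- the pointwise union, which is again a choice and still satisfies $(\beta)$, because $(\beta)$ constrains only one nested pair of menus at a time and any two nested menus of the union already occur together in a single member of the chain. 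Zorn's Lemma then yields a maximal element $c^{*}\colon\Omega^{*}\rightrightarrows U$, and it suffices to show $\Omega^{*}=\powPlus$: otherwise I pick $A\in\powPlus\setminus\Omega^{*}$ and enlarge $c^{*}$ to domain $\Omega^{*}\cup\{A\}$ preserving $(\beta)$, contradicting maximality.

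Everything thus reduces to the one-menu extension step, which I regard as the heart of the matter: given $d\colon\Delta\rightrightarrows U$ satisfying $(\beta)$ and $A\in\powPlus\setminus\Delta$, produce $\cPlus(A)$ with $\emptyset\neq\cPlus(A)\subseteq A$ so that $d\cup\{(A,\cPlus(A))\}$ still satisfies $(\beta)$. Since $(\beta)$ involves only one nested pair, the only new instances to validate are the pairs $(A,B)$ with $A\subseteq B\in\Delta$ (``upward'') and the pairs $(D,A)$ with $\Delta\ni D\subseteq A$ (``downward''); all pairs internal to $\Delta$ already hold. I would put $\mathcal{D}\defAs\{D\in\Delta : D\subseteq A\}$ and let $G$ be the intersection graph of the family $\mathcal{S}\defAs\{d(D):D\in\mathcal{D}\}$ of choice sets of the sub-menus of $A$ (note $\mathcal{S}=\emptyset$ precisely when $\mathcal{D}=\emptyset$, choice sets being nonempty). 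If $\mathcal{D}=\emptyset$, set $\cPlus(A)\defAs\{a\}$ for an arbitrary $a\in A$; otherwise pick \emph{any} connected component $\mathcal{C}$ of $G$ and set $\cPlus(A)\defAs\bigcup\mathcal{C}$, a nonempty subset of $A$ because every vertex of $\mathcal{C}$ is a set $d(D)$ with $\emptyset\neq d(D)\subseteq D\subseteq A$.

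The two verifications are then short. Downward: if $D\in\mathcal{D}$ and $d(D)\cap\cPlus(A)\neq\emptyset$, then $d(D)$ --- itself a vertex of $G$ --- meets a vertex of $\mathcal{C}$, hence lies in $\mathcal{C}$, hence $d(D)\subseteq\bigcup\mathcal{C}=\cPlus(A)$ (nothing to check when $\mathcal{D}=\emptyset$). Upward: if $A\subseteq B\in\Delta$ and $\cPlus(A)\cap d(B)\neq\emptyset$, then in the singleton case $a\in d(B)$ and $\cPlus(A)=\{a\}\subseteq d(B)$; otherwise some vertex $d(D_{0})\in\mathcal{C}$ meets $d(B)$, and since $D_{0}\subseteq A\subseteq B$ with $D_{0},B\in\Delta$, axiom $(\beta)$ for $d$ gives $d(D_{0})\subseteq d(B)$, after which walking along any path of $\mathcal{C}$ out of $d(D_{0})$ and invoking $(\beta)$ at each step propagates the inclusion to every vertex of $\mathcal{C}$, whence $\cPlus(A)=\bigcup\mathcal{C}\subseteq d(B)$.

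I expect the main obstacle to be precisely the correct choice of $\cPlus(A)$: a single new menu is pulled two opposite ways by $(\beta)$ --- its sub-menus force $\cPlus(A)$ to absorb whole choice sets that it meets, whereas its super-menus force $\cPlus(A)$ to stay cohesive enough to sit inside every $d(B)$ that it meets --- and partitioning the sub-menu choice sets into the connected components of their intersection graph, then taking one such component as $\cPlus(A)$, is exactly what reconciles the two demands. What makes it work is that the upward direction costs nothing extra: it follows from $(\beta)$ applied to the nested pairs $D_{0}\subseteq B$, which already lie inside $\Delta$.
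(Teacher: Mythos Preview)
Your proof is correct. Both you and the paper hinge on the same key idea: the choice value of a new menu should be the union of a connected component of the intersection graph of the choice sets of its sub-menus. The difference is in the scaffolding. The paper defines the lifting \emph{all at once}: for every $A\in\powPlus$ it fixes a point $u_{A}\in A$ (with $u_{A}\in c(A)$ when $A\in\Omega$), takes $\mathcal{C}_{A}$ to be the connected component of the intersection graph of $c[\Omega_{A}]$ containing $u_{A}$ (when $u_{A}$ lies in some such choice set), and sets $\cPlus(A)=\bigcup\mathcal{C}_{A}$ or $\{u_{A}\}$ accordingly; the verification of $(\beta)$ is then a direct comparison of $\mathcal{C}_{D}$ and $\mathcal{C}_{E}$ for $D\subseteq E$. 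You instead proceed one menu at a time via Zorn, with a clean local extension lemma whose upward and downward checks you carry out explicitly. What the paper's route buys is explicitness: the extension is given by a closed formula in terms of the original $c$ and $\Omega$ only, which is pertinent to the decidability applications downstream (for finite $U$ the construction is effective). What your route buys is modularity: the one-step argument is self-contained and easy to audit, and the propagation of the inclusion $d(D_{0})\subseteq d(B)$ along the component makes the role of $(\beta)$ on the old domain especially transparent.
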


\begin{proof}
Clearly, axiom $(\beta)$ holds for any choice that admits an extension to a total choice satisfying $(\beta)$.
Thus, it suffices to prove that any choice $c \colon \Omega \rightrightarrows U$ satisfying $(\beta)$ has the $(\beta)$-lifting property. 
For every $A \in \powPlus$, pick an element $u_{A} \in A$, subject only to the condition that $u_{A} \in c(A)$ whenever $A \in \Omega$. 
If $u_{A} \in \bigcup c[\Omega_{A}]$ (where $\Omega_{A} \defAs \{ B \in \Omega : B \subseteq A\}$), then let $\mathcal{C}_{A} \subseteq c[\Omega_{A}]$ be the connected component of the intersection graph associated to the family $c[\Omega_{A}]$ such that $u_{A} \in \bigcup \mathcal{C}_{A}$. 
Then, for $A \in \powPlus$, set
\[
\cPlus(A) \defAs \begin{cases}
\bigcup \mathcal{C}_{A} & \text{if } u_{A} \in \bigcup c[\Omega_{A}]\\
\{u_{A}\} & \text{otherwise}.
\end{cases}
\]
By definition, $\cPlus$ is a total contractive map on $\powPlus$ that is never empty-valued. 
In addition, if $A \in \Omega$, then $\cPlus(A) = \bigcup \mathcal{C}_{A} = c(A)$.
It follows that $\cPlus$ is a well-defined total choice that extends $c$. 

To complete the proof, we only need to show that $\cPlus$ satisfies $(\beta)$. 
Let $D,E \in \powPlus$ be such that $D \subseteq E$ and $\cPlus(D) \cap \cPlus(E) \neq \emptyset$. 
If $|\cPlus(D)| = 1$, then plainly $\cPlus(D) \subseteq \cPlus(E)$. 
On the other hand, if $|\cPlus(D)| > 1$, then $\cPlus(D) = \mathcal{C}_{D}$, hence $\cPlus(E) = \mathcal{C}_{E}$ and $\mathcal{C}_{D} \subseteq \mathcal{C}_{E}$. 
Thus, we obtain again $\cPlus(D) \subseteq \cPlus(E)$, as claimed. 
\end{proof}


\newcommand{\env}[2]{\mathsf{env}_{#1}(#2)}

\subsection{Lifting of \textsf{WARP}} \label{SECT:lifting of WARP}

Finally, we characterize choices that have the \textsf{WARP}-lifting property.
This characterization will be obtained in terms of the existence of a suitable Noetherian total preorder on the collection of the Euler's regions of the union of the choice domain with its image under the given choice. 
(Recall that a \emph{preorder} is a binary relation that is reflexive and transitive. Further, a relation $R$ on $X$ is \emph{Noetherian} if the converse relation $R^{-1}$ is well-founded, i.e., if every nonempty subset of $X$ has an $R$-maximal element.)

Thus, let $c \colon \Omega \rightrightarrows U$ be a partial choice. 
Denote by $\mathcal{E}$ the Euler's diagram of the family 
$
\Omega^{^{+}} \defAs \Omega \cup c[\Omega]\,,
$
namely the partition \\
\centerline{$
\mathcal{E} \defAs \left\{\bigcap \Gamma \:\setminus\: \bigcup (\Omega^{^{+}} \setminus \Gamma) : \emptyset \neq \Gamma \subseteq \Omega^{^{+}} \right\} \setminus \{\emptyset\}
$}
of $\bigcup \Omega^{^{+}}$ formed by all the nonempty sets of the form $\bigcap \Gamma \:\setminus\: \bigcup (\Omega^{^{+}} \setminus \Gamma)$, for $\emptyset \neq \Gamma \subseteq \Omega^{^{+}}$.
Further, for each $A \in \powPlus$, denote by $\env{\mathcal{E}}{A}$ the \emph{envelope of $A$ in $\mathcal{E}$}, namely, the collection of regions in $\mathcal{E}$ intersecting $A$; formally, 
$
\env{\mathcal{E}}{A} \defAs \{E \in \mathcal{E} : E \cap A \neq \emptyset\}\,.
$
Observe that, for each $B \in \Omega$, we have $\env{\mathcal{E}}{B} = \{E \in \mathcal{E} : E \subseteq B\}$.
It turns out that the choice $c$ can be lifted to a total choice satisfying \textsf{WARP} if and only if there exists a suitable Noetherian total preorder $\lesssim$ on $\mathcal{E}$ such that\\ 
\centerline{
$
E \subseteq B  \text{ and } E' \subseteq c(B) \text{ ~~~(for some  $B \in \Omega$)}  \qquad  \Longrightarrow \qquad E \lesssim E'  \,.
$}
More precisely, we have:

\begin{theorem} \label{THM:lifting WARP}
A partial choice $c \colon \Omega \rightrightarrows U$ has the \textsf{WARP}-lifting property if and only if there exists a total Noetherian preorder $\lesssim$ on the collection $\mathcal{E}$ of Euler's regions of $\Omega \cup c[\Omega]$ such that, for all $B \in \Omega$ and $E,E' \in \mathcal{E}$, the following conditions hold:
\begin{enumerate}[label=(\alph*)]
\item\label{aTheorem3} if $E \subseteq B$ and $E' \subseteq c(B)$, then $E \lesssim E'$;

\item\label{bTheorem3} if 
$E$ is $\lesssim$-maximal in $\env{\mathcal{E}}{B}$, then $E \subseteq c(B)$.
\end{enumerate}
\end{theorem}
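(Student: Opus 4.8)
The plan is to route everything through the classical bridge between \textsf{WARP} and transitive rationalizability. First I would record the elementary fact that a \emph{total} choice $\cPlus\colon\powPlus\rightrightarrows U$ satisfies \textsf{WARP} if and only if there is a total preorder $\precsim$ on $U$ with $\cPlus(A)=\max_{\precsim}A$ for all $A\in\powPlus$, and that, since $\cPlus$ is never empty-valued on \emph{all} nonempty subsets of $U$, such a $\precsim$ is moreover Noetherian (every nonempty subset of $U$ has a $\precsim$-greatest element). The ``if'' half is the easy direction and I would verify it directly; the ``only if'' half is the standard equivalence \textsf{WARP} $\Leftrightarrow$ $(\alpha)\wedge(\beta)$ $\Leftrightarrow$ transitive rationalizability (Sen). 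Thus the theorem reduces to characterizing the partial choices $c$ that extend to $\max_{\precsim}(\cdot)$ for some Noetherian total preorder $\precsim$ on $U$.

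The structural fact that both directions hinge on is the \emph{atomicity} of the Euler regions over $\Omega^{+}\defAs\Omega\cup c[\Omega]$: by construction, for every $E\in\mathcal{E}$ and every $S\in\Omega^{+}$ one has $E\subseteq S$ or $E\cap S=\emptyset$, so every member of $\Omega^{+}$ is a union of regions. In particular, for $B\in\Omega$ both $B$ and $c(B)$ are unions of regions, $\env{\mathcal{E}}{B}=\{E\in\mathcal{E}:E\subseteq B\}$ partitions $B$, and $c(B)=\bigcup\{E\in\mathcal{E}:E\subseteq c(B)\}$. (This is precisely why $\mathcal{E}$ must be built from $\Omega\cup c[\Omega]$ and not just from $\Omega$: it is what lets the behaviour of $c$ be read off at the level of regions.) Also, once $\precsim$ is Noetherian, every region $E$ has a nonempty set $\max_{\precsim}E$ of $\precsim$-greatest elements, which form a single $\precsim$-class; write $t(E)$ for that class.

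For necessity, given a \textsf{WARP}-lifting $\cPlus=\max_{\precsim}(\cdot)$ as above, I would set $E\lesssim E'$ iff $t(E)\precsim t(E')$. Being the pullback of the total order on the quotient $U/{\approx}$, this is a total preorder; it is Noetherian because, for any nonempty $\mathcal{F}\subseteq\mathcal{E}$, a $\precsim$-maximal element of $\bigcup\{\max_{\precsim}E:E\in\mathcal{F}\}$ lies in some $E_{0}\in\mathcal{F}$, whence $t(E)\precsim t(E_{0})$ for every $E\in\mathcal{F}$, so $E_{0}$ is $\lesssim$-maximal in $\mathcal{F}$. Condition \ref{aTheorem3}: if $E\subseteq B$ and $E'\subseteq c(B)=\max_{\precsim}B$ with $B\in\Omega$, then every element of $E'$ is $\precsim$-greatest in $B$, so $t(E')$ is the top $\precsim$-class of $B$, while $t(E)$ has a representative inside $E\subseteq B$; hence $t(E)\precsim t(E')$. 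Condition \ref{bTheorem3}: if $E$ is $\lesssim$-maximal in $\env{\mathcal{E}}{B}$, then $t(E'')\precsim t(E)$ for every region $E''\subseteq B$; since every point of $B$ lies in such an $E''$ and is $\precsim$-below $t(E'')$, every point of $B$ is $\precsim$-below $\max_{\precsim}E\subseteq E\subseteq B$, so $\max_{\precsim}E\subseteq\max_{\precsim}B=c(B)$; thus $E\cap c(B)\neq\emptyset$, and atomicity of $E$ over $\Omega^{+}\ni c(B)$ forces $E\subseteq c(B)$.

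For sufficiency, given a Noetherian total preorder $\lesssim$ on $\mathcal{E}$ satisfying \ref{aTheorem3} and \ref{bTheorem3}, I would define $\precsim$ on $U$ by letting $x\precsim y$ iff (region of $x$) $\lesssim$ (region of $y$), placing all points of $U\setminus\bigcup\Omega^{+}$ (if any) in a single class strictly below everything else. This $\precsim$ is again a Noetherian total preorder, so $\cPlus\defAs\max_{\precsim}(\cdot)$ is a total choice satisfying \textsf{WARP}; it remains to check $\cPlus(B)=c(B)$ for $B\in\Omega$. Since the regions meeting $B$ are exactly the members of $\env{\mathcal{E}}{B}$, one has $\max_{\precsim}B=\bigcup\{E\in\env{\mathcal{E}}{B}:E\text{ is }\lesssim\text{-maximal in }\env{\mathcal{E}}{B}\}$. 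Then \ref{bTheorem3} gives $\max_{\precsim}B\subseteq c(B)$, and \ref{aTheorem3} applied with the menu $B$ shows that every region $E\subseteq c(B)$ is $\lesssim$-above every region of $B$, hence $\lesssim$-maximal in $\env{\mathcal{E}}{B}$, so $c(B)\subseteq\max_{\precsim}B$. I expect the main difficulty to be not any single computation but keeping two things straight throughout: the classical reduction of the first paragraph — in particular the role of Noetherianness, which is what guarantees that $\max_{\precsim}$ is nonempty on infinite menus — and the systematic use of atomicity of the Euler regions over $\Omega\cup c[\Omega]$, which is the hinge of both implications and the reason the choice sets $c(B)$ must be included in the family defining $\mathcal{E}$.
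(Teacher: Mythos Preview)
Your proof is correct, but it proceeds differently from the paper's. The paper works \emph{directly} with \textsf{WARP}: for necessity it sets $E \lesssim E' \Leftrightarrow \cPlus(E\cup E') \cap E' \neq \emptyset$ and checks totality, transitivity, Noetherianness, and conditions \ref{aTheorem3}--\ref{bTheorem3} by repeated applications of \textsf{WARP}; for sufficiency it defines $\cPlus$ explicitly (putting points of $B \setminus \bigcup\mathcal{E}$ at the \emph{top}, then taking $\bigcup\max_{\lesssim}\env{\mathcal{E}}{B}\cap B$) and verifies \textsf{WARP} by hand. You instead route both directions through the classical bridge ``total \textsf{WARP} choice $\Leftrightarrow$ $\max_{\precsim}(\cdot)$ for a Noetherian total preorder $\precsim$ on $U$'': for necessity you pull $\precsim$ forward to regions via their top $\precsim$-classes, and for sufficiency you push $\lesssim$ back to $U$ (placing outside points at the \emph{bottom}) and get \textsf{WARP} for free. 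Your argument is shorter and conceptually cleaner---the transitivity and \textsf{WARP} verifications are absorbed into the cited equivalence, and the atomicity of $\mathcal{E}$ over $\Omega\cup c[\Omega]$ is isolated as the single structural ingredient---whereas the paper's argument is fully self-contained and does not invoke Sen's theorem. Note that the two sufficiency constructions yield genuinely different liftings when $U\setminus\bigcup\Omega^{+}\neq\emptyset$, which is harmless here but worth flagging.
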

 
The (constructive) proof of Theorem~\ref{THM:lifting WARP} is technical and rather long.
In order to maintain focus on the main goal of the paper, we omit including it here.
The interested reader can find it in the extended version of the paper available at the following URL:~~ \texttt{https://arxiv.org/abs/1708.06121}.


\section{The satisfiability problem in presence of a choice correspondence}\label{SECT:satProb}

We are now ready to define the syntax and semantics of the Boolean set-theoretic language extended with a choice correspondence, denoted by $\BSTC$, of which we shall study the satisfiability problem.

\subsection{Syntax of $\BSTC$}
The language $\BSTC$ involves
\begin{itemize}
\item denumerable collections $\mathcal{V}_{0}$ and $\mathcal{V}_{1}$ of \emph{individual} and \emph{set} variables, respectively;

\item the constant $\varnothing$ (empty set);

\item operation symbols: $\cdot\cup\cdot$, $\cdot\cap\cdot$, $\cdot\setminus\cdot$ , $\{\cdot\}$, $\choice(\cdot)$ (choice map);

\item predicate symbols: $\cdot=\cdot$, $\cdot\subseteq\cdot$, $\cdot\in\cdot$.
\end{itemize}
\emph{Set terms} of $\BSTC$ are recursively defined as follows:
\begin{itemize}
\item set variables and the constant $\varnothing$ are set terms;

\item if $T, T_{1},T_{2}$ are set terms and $x$ is an individidual variable, then
$
T_{1} \cup T_{2},~~ T_{1} \cap T_{2},~~ T_{1} \setminus T_{2},~~ \choice(T),~~ \{x\}
$
are set terms.
\end{itemize}
The \emph{atomic formulae} (or \emph{atoms}) of $\BSTC$ have one of the following two forms
$
T_{1} = T_{2}, ~~  T_{1} \subseteq T_{2}, 
$
where $T_{1},T_{2}$ are set terms. Atoms and their negations are called \emph{literals}.\\[.1cm]
Finally, \emph{$\BSTC$-formulae} are propositional combinations of $\BSTC$-atoms by means of the usual logical connectives $\land$, $\lor$, $\lnot$, $\implies$, $\iff$.

We regard $\{x_{1},\ldots,x_{k}\}$ as a shorthand for the set term $\{x_{1}\} \cup \ldots \cup \{x_{k}\}$. Likewise, $x \in T$ and $x = y$ are regarded as shorthands for $\{x\} \subseteq T$ and $\{x\} = \{y\}$, respectively.

\emph{Choice terms} are $\BSTC$-terms of type $\choice(T)$, whereas \emph{choice-free terms} are $\BSTC$-terms which do not involve the choice map $\choice$ (at any level of nesting). We refer to $\BSTC$-formulae containing only choice-free terms as $\BSTC^{-}$-formulae.\footnote{Up to minor syntactic differences, $\BSTC^{-}$-formulae are essentially \textsf{2LSS}-formulae, whose decision problem has been solved (see, for instance, \cite[Exercise 10.5]{CanOmoPol01}).}

\subsection{Semantics of $\BSTC$}

We first describe the \emph{unrestricted semantics of $\BSTC$}, when the choice operator is not required to satisfy any particular consistency axiom.

A \emph{set assignment} is a pair $\model=(U,M)$, where $U$ is any nonempty collection of objects, called the \emph{domain} or \emph{universe} of $\model$, and $M$ is an assignment over the variables of $\BSTC$ such that
\begin{itemize}
\item $\M x \in U$, for each individual variable $x \in \mathcal{V}_{0}$;

\item $\M \varnothing \defAs \emptyset$;

\item $\M X \subseteq U$, for each set variable $X \in \mathcal{V}_{1}$;

\item $\M \choice$ is a total choice correspondence over $U$.
\end{itemize}
Then, recursively, we put
\begin{itemize}
\item $\M{(T_{1} \otimes T_{2})} \defAs \M T_{1} \otimes \M T_{2}$, where $T_{1},T_{2}$ are set terms and $\otimes \in \{\cup,\cap,\setminus\}$;

\item $\M{\{x\}} \defAs  \{\M x\}$, where $x$ is an individual variable;

\item $\M {(\choice(T))} \defAs \M\choice(\M T)$, where $T$ is a set term.
\end{itemize}
Satisfiability of any $\BSTC$-formula $\psi$ by $\model$ (written $\model \models \psi$) is defined by putting
\[\begin{array}{rclrcr}
\model &\models& T_{1} \star T_{2}   &\qquad\text{iff}\qquad& \M T_{1} \star \M T_{2}\,,
%
\end{array}
\]
for $\BSTC$-atoms $T_{1} \star T_{2}$ (where $T_{1},T_{2}$ are set terms and $\star \in \{=,\subseteq\}$), and by interpreting logical connectives according to their classical meaning.

For a $\BSTC$-formula $\psi$,  if $\model \models \psi$ (i.e.,
$\model$ \emph{satisfies} $\psi$), then $\model$ is said to be a
\emph{$\BSTC$-model for $\psi$}.  A $\BSTC$-formula is said to be
\emph{satisfiable} if it has a $\BSTC$-model.  
Two $\BSTC$-formulae $\varphi$ and $\psi$ are \emph{equivalent} if they share exactly the same $\BSTC$-models; they are \emph{equisatisfiable} if one is satisfiable if and only if so is the other (possibly by different $\BSTC$-models). 

The \emph{satisfiability problem} (or \emph{decision problem}) for $\BSTC$ asks for an effective procedure (or \emph{decision procedure}) to establish whether any given $\BSTC$-formula is satisfiable or not.

\smallskip

We shall also address the satisfiability problem for $\BSTC$ under other semantics: specifically, the \emph{$(\alpha)$-semantics}, the \emph{$(\beta)$-semantics}, and the \textsf{WARP}-\emph{semantics} (whose satisfiability relations are denoted by $\models_{\alpha}$, $\models_{\beta}$, and $\models_{\textsf{WARP}}$, respectively). These differ from the unrestricted semantics in that the interpreted choice map $\M \choice$ is required to satisfy axiom $(\alpha)$ in the first case, axiom $(\beta)$ in the second case, and axioms $(\alpha)$ and $(\beta)$ conjunctively (namely \textsf{WARP}) in the latter case.


\smallskip

\subsection{The decision problem for $\BSTC$-formulae}\label{appendixDecProc}
%

The satisfiability problem for $\BSTC^{-}$- and $\BSTC$-formulae under the various semantics are \NP-hard, as the satisfiability problem for propositional logic can readily be reduced to any of them (in linear time).
In the cases of $(\alpha)$- and \textsf{WARP}-semantics, we shall prove \NP-completeness only under the additional hypothesis that the number of choice terms is constant, otherwise, in both cases, we have to content ourselves with a \NEXP\ complexity. As a by-product, it will follow that the satisfiability problem for $\BSTC^{-}$ is \NP-complete. On the other hand, we shall prove that the satisfiability problem for $\BSTC$-formulae under the unrestricted and the $(\beta)$-semantics can be reduced polynomially to the satisfiability problem for $\BSTC^{-}$-formulae, thereby proving their \NP-completeness.

\smallskip

Let $\varphi$ be a $\BSTC$-formula, $\mathsf{V}_{0} \subseteq \mathcal{V}_{0}$ and $\mathsf{V}_{1} \subseteq \mathcal{V}_{1}$ the collections of individual and set variables occurring in $\varphi$, respectively, and $\terms$ the collection of the set terms occurring in $\varphi$. For convenience, we shall assume that $\varnothing \in \terms$. Let also
\begin{equation}\label{choiceLiterals}
\choice(T_{1}),~\ldots,~\choice(T_{k})
\end{equation}
be the distinct choice terms occurring in $\varphi$, with $k \geqslant 0$ (when $k=0$, $\varphi$ is a $\BSTC^{-}$-formula). 
 
Without loss of generality, we may assume that $\varphi$ is in \emph{choice-flat form}, namely that all the terms $T_{1},\ldots,T_{k}$ in (\ref{choiceLiterals}) are choice-free. In fact, if this were not the case, then, for each choice term $\choice(T)$ in $\varphi$ occurring inside the scope of a choice symbol and such that $T$ is choice-free, we could replace in $\varphi$ all occurrences of $\choice(T)$ by a newly introduced variable $X_{T}$ and add the conjunct $X_{T} = \choice(T)$ to $\varphi$, until no choice term is left which properly contains a choice subterm. It is an easy matter to check that the resulting formula is in choice-flat form, it is equisatisfiable with $\varphi$ (under any of our semantics), and its size is linear in the size of $\varphi$. 

Without disrupting satisfiability, we may add to $\varphi$ the following formulae:
\begin{description}
\item[\emph{choice conditions}:] $\varnothing \neq \choice(T_{i}) \;\land\; \choice(T_{i}) \subseteq T_{i}$, ~~for $i=1,\ldots,k$;

\item[\emph{single-valuedness conditions}:] $T_{i} = T_{j} \implies \choice(T_{i}) = \choice(T_{j})$, ~~for all distinct $i,j = 1,\ldots,k$,
\end{description}
since they are plainly true in any $\BSTC$-assignment.
In this case the size of $\varphi$ could have up to a quadratic increase. However, the total number of terms remains unchanged.\footnote{See Footnote~\ref{FootnoteFiner}.} For the sake of simplicity, we shall assume that $\varphi$ includes its choice and single-valuedness conditions, and thereby say that it is \emph{complete}. 

Notice that the above considerations hold irrespectively of the semantics adopted. 

In the sections which follow, we study the satisfiability problem for complete $\BSTC$-formulae under the various semantics described earlier. We start our course with the \textsf{WARP}-semantics.

\subsubsection{\textsf{WARP}-semantics}\label{WARPsemantics}

We first derive some necessary conditions for $\varphi$ to be satisfiable and later prove their sufficiency. Hence, to begin with, let us assume that $\varphi$ is satisfiable under the \textsf{WARP}-semantics and let $\model = (U,M)$ be a model for it. 
Let $\M{\mathcal{R}_{\varphi}}$ be the Euler's diagram of $\M{\terms} \defAs \{\M T : T \in \terms\}$.
Notice that, for each region $\rho \in \M{\mathcal{R}_{\varphi}}$ and term $T \in \terms$, either $\rho \subseteq \M T$ or $\rho \cap \M T = \emptyset$. Thus, to each $\rho \in \M{\mathcal{R}_{\varphi}}$,  there corresponds a Boolean map $\pi_{\rho} \colon \terms \rightarrow \{\true,\false\}$ over $\terms$ (where we have identified the truth values \textsf{true} and \textsf{false} with $\true$ and $\false$, respectively) such that
\begin{equation}\label{defPlaces}
\pi_{\rho}(T) = \true \iff \rho \subseteq \M T \iff \rho \cap \M T \neq \emptyset\,,~~~~\text{for } T \in \terms\,.
\end{equation}
Let $\M{\Pi_{\varphi}} \defAs \{\pi_{\rho} : \rho \in \M{\mathcal{R}_{\varphi}}\}$. Hence, we have:
\begin{enumerate}[label=\text{(\roman*)}]
\item\label{zeroPlace} $\pi(\varnothing) = \false$, for each $\pi \in \M{\Pi_{\varphi}}$;

\item\label{firstPlace} $\pi(T_{1} \cup T_{2}) = \pi(T_{1}) \lor \pi(T_{2})$, for each map $\pi \in \M{\Pi_{\varphi}}$ and set term $T_{1} \cup T_{2}$ in $\varphi$;

\item\label{secondPlace} $\pi(T_{1} \cap T_{2}) = \pi(T_{1}) \land \pi(T_{2})$, for each map $\pi \in \M{\Pi_{\varphi}}$ and set term $T_{1} \cap T_{2}$ in $\varphi$;

\item\label{thirdPlace} $\pi(T_{1} \setminus T_{2}) = \pi(T_{1}) \land \lnot \pi(T_{2})$, for each map $\pi \in \M{\Pi_{\varphi}}$ and set term $T_{1} \setminus T_{2}$ in $\varphi$.
\end{enumerate}
In addition, we have $\M T = \bigcup \{\rho \in \M{\mathcal{R}_{\varphi}} : \pi_{\rho}(T) = \true\}$, for every $T \in \terms$.

By uniformly replacing the atomic formulae in $\varphi$ with propositional variables, in such a way that different occurrences of the same atomic formula are replaced by the same propositional variable and different atomic formulae are replaced by distinct propositional variables, we can associate to $\varphi$ its \emph{propositional skeleton} $\mathsf{P}_{\varphi}$ (up to variables renaming). For instance, the propositional skeleton of
\begin{equation}\label{prop-example}
((X = Y \setminus X \:\land\: Y = X \cup \choice(X_{1})) \Longrightarrow Z \neq \varnothing) \Longleftrightarrow (X = Y \setminus X \Longrightarrow (Y = X \cup \choice(X_{1})) \Longrightarrow Z = \varnothing))
\end{equation}
is the propositional formula 
\[
((P_{1} \land P_{2}) \Longrightarrow \neg P_{3}) \Longleftrightarrow (P_{1} \Longrightarrow (P_{2} \Longrightarrow P_{3}))\,.
\]
Plainly, a necessary condition for $\varphi$ to be satisfiable (by a $\BSTC$-model) is that its skeleton $\mathsf{P}_{\varphi}$ is propositionally satisfiable (however, the converse does not hold in general). 

\smallskip
A collection $\mathcal{A}$ of atoms of $\varphi$ is said to be \emph{promising} for $\varphi$ if the valuation which maps to \textsf{true} the propositional variables corresponding to the atoms in $\mathcal{A}$ and to \textsf{false} the remaining ones satisfies the propositional skeleton $\mathsf{P}_{\varphi}$. For instance, in the case of (\ref{prop-example}), all collections of its  atoms not containing both $X = Y \setminus X$ and $Y = X \cup \choice(X_{1})$ are promising for (\ref{prop-example}).

Let $\mathcal{A}_{\varphi}^{+}$ be the collection of the atoms in $\varphi$ satisfied by $\model$ and $\mathcal{A}_{\varphi}^{-}$ the collection of the remaining atoms in $\varphi$, namely those that are disproved by $\model$. It can be easily checked that $\mathcal{A}_{\varphi}^{+}$ is promising. In addition, for every atom $T_{1} = T_{2}$ in $\varphi$ and $\pi \in \M{\Pi_{\varphi}}$, we have $\pi(T_{1}) = \pi(T_{2})$ if and only if $T_{1} = T_{2}$ is in $\mathcal{A}_{\varphi}^{+}$. Likewise, for every atom $T_{1} \subseteq T_{2}$ in $\varphi$ and $\pi \in \M{\Pi_{\varphi}}$, we have $\pi_{\rho}(T_{1}) \leqslant \pi_{\rho}(T_{2})$ if and only if $T_{1} \subseteq T_{2}$ is in $\mathcal{A}_{\varphi}^{+}$. Thus, in particular, for every atom $T_{1} = T_{2}$ in $\mathcal{A}_{\varphi}^{-}$, there exists a map $\pi \in \M{\Pi_{\varphi}}$ such that $\pi(T_{1}) \neq \pi(T_{2})$. Likewise, for every atom $T_{1} \subseteq T_{2}$ in $\mathcal{A}_{\varphi}^{-}$, there exists a map $\pi \in \M{\Pi_{\varphi}}$ such that $\pi(T_{1}) > \pi(T_{2})$.

%
%
\begin{definition}[Places]\label{places}\rm
Any Boolean map $\pi$ on $\terms$\, for which the above properties \ref{zeroPlace}--\ref{thirdPlace} hold for each set term in $\varphi$ is called a \emph{place for $\varphi$}.

For a given set $\mathcal{A}$ of atoms occurring in $\varphi$, a set $\Pi$ of places for $\varphi$ is $\mathcal{A}$-\emph{ample} if 
\begin{enumerate}[label=\text{(\roman*)},start=5]
\item\label{fourthPlace} $\pi(T_{1}) = \pi(T_{2})$ for each $\pi \in \Pi$, provided that the atom $T_{1} = T_{2}$ is not in $\mathcal{A}$;

\item\label{fifthPlace} $\pi(T_{1}) \leqslant \pi(T_{2})$ for each $\pi \in \Pi$, provided that the atom $T_{1} \subseteq T_{2}$ is not in $\mathcal{A}$;

\item\label{sixthPlace} $\pi(T_{1}) \neq \pi(T_{2})$ for some $\pi \in \Pi$, if the atom $T_{1} = T_{2}$ belongs to $\mathcal{A}$;

\item\label{seventhPlace} $\pi(T_{1}) > \pi(T_{2})$ for some $\pi \in \Pi$, if the atom $T_{1} \subseteq T_{2}$ belongs to $\mathcal{A}$.
\eod
%
\end{enumerate}
\end{definition}
Notice that conditions \ref{zeroPlace}--\ref{fifthPlace} are universal, whereas conditions \ref{sixthPlace}--\ref{seventhPlace} are existential.

The considerations made just before Definition~\ref{places} yield that $\M{\Pi_{\varphi}}$ is an $\mathcal{A}_{\varphi}^{-}$-ample set of places for $\varphi$. However, in order to later establish some tight complexity results, it is convenient to enforce a polynomial bound for the cardinality of the set of places in terms of the size $|\varphi|$ of $\varphi$ (where, for instance, $|\varphi|$ could be defined as the number of nodes in the syntax tree of $\varphi$). We do this as follows: for each atom $T_{1} = T_{2}$ (resp., $T_{1} \subseteq T_{2}$) in $\mathcal{A}_{\varphi}^{-}$, we select a place $\pi_{\rho}$, with $\rho \in \M{\mathcal{R}_{\varphi}}$ such that $\rho \subseteq \M{T_{1}} \Longleftrightarrow \rho \nsubseteq \M{T_{2}}$ (resp., $\rho \subseteq \M{T_{1}} \setminus \M{T_{2}}$) holds, and call their collection $\Pi_{1}$. Plainly, we have $|\Pi_{1}| \leq |\mathcal{A}_{\varphi}^{-}|$. Notice that $\Pi_{1}$ is $\mathcal{A}_{\varphi}^{-}$-ample.\footnote{\label{FootnoteFiner}A finer construction would yield an $\mathcal{A}_{\varphi}^{-}$-ample set of places $\Pi_{1}'$ such that $|\Pi_{1}'| \leq |\terms|$. Notice that $|\Pi_{1}| = \mathcal{O}(|\Pi_{1}'|^{2})$.}

Conditions \ref{zeroPlace}--\ref{thirdPlace} take care of the structure of set terms in $\varphi$ but those of the form $\{x\}$ or $\choice(T)$, conditions \ref{fourthPlace} and \ref{fifthPlace} take care of the atoms in $\varphi$ deemed to be positive, whereas conditions \ref{sixthPlace} and \ref{seventhPlace} take care of the remaining atoms in $\varphi$, namely those deemed to be negative.

To take care of set terms of the form $\{x\}$ in $\varphi$, we observe that, for every $x \in \mathsf{V}_{0}$, there exists a unique Euler's region $\rho_{x} \in \M{\mathcal{R}_{\varphi}}$ such that $\M x \in \rho_{x}$. Let  $\pi^{x}$ be the place corresponding to $\rho_{x}$ according to (\ref{defPlaces}), namely $\pi^{x} \defAs \pi_{\rho_{x}}$, and put $\Pi_{2} \defAs \{\pi^{x} : x \in \mathsf{V}_{0}\}$. 

\begin{definition}[Places at variables]\label{placeAtVariables}\rm
Let $x$ be an individual variable occurring in $\varphi$. A \emph{place (for $\varphi$) at the variable $x$} is any place $\pi$ for $\varphi$ such that $\pi(\{x\}) = \true$.\eod
\end{definition}

Next, we take care of choice terms. Thus, let $\Omega \defAs \{\M{T_{1}},\ldots,\M{T_{k}}\}$ and let ${\mathcal{E}}$ be the Euler's diagram of $\Omega \cup \M\choice[\Omega]$. Notice that each region in ${\mathcal{E}}$ is a disjoint union of regions in $\M{\mathcal{R}_{\varphi}}$; moreover, the partial choice $\M\choice\restrict{\Omega}$ over the choice domain $\Omega$ enjoys the $\textsf{WARP}$-lifting property. Thus, by Theorem~\ref{THM:lifting WARP}, there exists a total Noetherian preorder $\lesssim$ on ${\mathcal{E}}$ such that, for all $\sigma',\sigma'' \in {\mathcal{E}}$ and $\M{T} \in \Omega$, we have:
\begin{enumerate}[label=(\Alph*)]
\item\label{Aabove}  if $\sigma' \subseteq \M{T}$ and $\sigma'' \subseteq \M\choice(\M{T})$, then $\sigma' \lesssim \sigma''$;

\item\label{Babove}  if $\sigma'$ is $\lesssim$-maximal in $\mathsf{env}_{{\mathcal{E}}}(\M{T})$, then $\sigma' \subseteq \M\choice(\M{T})$.
\end{enumerate}
For each region $\sigma \in \mathcal{E}$, let us select a place $\pi_{\rho}$, such that $\rho \in \M{\mathcal{R}_{\varphi}}$ and $\rho \subseteq \sigma$, and call their collection $\Pi_{3}$. Set $\Pi \defAs \Pi_{1} \cup \Pi_{2} \cup \Pi_{3}$. Plainly, $|\Pi| \leqslant |\mathcal{A}| + |\mathsf{V}_{0}| + 2^{k}$, $\Pi$ is $\mathcal{A}_{\varphi}^{-}$-ample, and $\pi^{x}$ is the sole place in $\Pi$ at the variable $x$, for each individual variable $x$ in $\varphi$. To ease notation, for $\pi \in \Pi$, $\Pi' \subseteq \Pi$, and $T \in \terms$,  we shall also write (i) $\pi \subseteq T$ for $\pi(T) = \true$, (ii) $\Pi' \subseteq T$ for $\pi' \subseteq T$, for every $\pi' \in \Pi'$, (iii) $\Pi' \ni\in T$ for $\pi' \subseteq T$, for some $\pi' \in \Pi'$.
\renewcommand{\P}{\mathcal{P}}
For each $\sigma \in \mathcal{E}$, let $\Pi_{\sigma} \defAs \{\pi_{\rho} : \rho \in \M{\mathcal{R}_{\varphi}} \text{ and } \rho \subseteq \sigma\}$, and call $\P$ their collection. 
Then, by \ref{Aabove} and \ref{Babove} above, there exists a total Noetherian preorder $\precsim$ on $\P$ such that, for all $\Pi',\Pi'' \in \P$ and $i \in \{1,\ldots,k\}$, the following conditions hold:
\begin{enumerate}[label=(\Alph*')]
\item\label{AabovePrime}  if $\Pi' \subseteq T_{i}$ and $\Pi'' \subseteq \choice(T_{i})$, then $\Pi' \precsim \Pi''$;


\item\label{BabovePrime}  if $\Pi'$ is $\precsim$-maximal in $\{\Pi^{*} \in \P : 
\Pi^{*} \ni\in T_{i}\}$, then $\Pi' \subseteq \choice(T_{i})$.
%
\end{enumerate}



Summing up, we have the following result:
\begin{lemma}\label{necessity2LSS}
Let $\varphi$ be a $\BSTC$-formula in choice-flat form, $\mathsf{V}_{0}$ the set of individual variables occurring in it, and $\choice(T_{1}),~\ldots,~\choice(T_{k})$ the choice terms occurring in it. If $\varphi$ is satisfiable under the \textsf{WARP}-semantics, then there exist an $\mathcal{A}$-ample set $\Pi$ of places for $\varphi$ such that $|\Pi| \leq |\mathcal{A}| + |\mathsf{V}_{0}| + 2^{k}$, for some promising set $\mathcal{A}$ of atoms in $\varphi$, and a map $x \mapsto \pi^{x}$ from $\mathsf{V}_{0}$ into $\Pi$ such that $\pi^{x}$ is the sole place in $\Pi$ at the variable $x$, for $x \in \mathsf{V}_{0}$. In addition, if
$
\Pi_{\choice} \defAs \big\{\pi \in \Pi : \pi \subseteq T_{i} \text{ or } \pi \subseteq \choice(T_{i})\text{, for some } i \in \{1,\ldots,k\} \big\}\,,
$
$\sim_{\choice}$ is the equivalence relation on $\Pi_{\choice}$ such that
\\
\centerline{$
\pi \sim_{\choice} \pi' \quad \Longleftrightarrow \quad \pi(T_{i}) = \pi'(T_{i}) \text{ ~and~ } \pi(\choice(T_{i})) = \pi'(\choice(T_{i}))\text{\,,~ for every } i = 1,\ldots,k\,,
$}
and $\P \defAs \Pi_{\choice}/\sim_{\choice}$, then there exists a total Noetherian preorder $\precsim$ on $\P$ such that conditions \ref{AabovePrime} and \ref{BabovePrime} are satisfied for all $\Pi',\Pi'' \in \P$ and $i \in \{1,\ldots,k\}$.
\eod
\end{lemma}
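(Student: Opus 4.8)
Since the statement simply packages together the analysis carried out in the paragraphs preceding it, my plan is to assemble those pieces into one coherent argument. I would start from a model $\model=(U,M)$ witnessing the satisfiability of $\varphi$ under the \textsf{WARP}-semantics, form the Euler's diagram $\M{\mathcal{R}_\varphi}$ of the interpreted terms $\M{\terms}$, and attach to each region $\rho$ the Boolean map $\pi_\rho$ on $\terms$ given by (\ref{defPlaces}). A routine induction on the structure of set terms shows that \ref{zeroPlace}--\ref{thirdPlace} hold for every $\pi_\rho$, so each $\pi_\rho$ is a place for $\varphi$. I would then take $\mathcal{A}\defAs\mathcal{A}_\varphi^{+}$ (the atoms of $\varphi$ satisfied by $\model$): it is promising, because the truth assignment it induces on the propositional skeleton is exactly the one read off from $\model$; moreover the whole family $\M{\Pi_\varphi}=\{\pi_\rho:\rho\in\M{\mathcal{R}_\varphi}\}$ satisfies the universal ampleness clauses \ref{fourthPlace}--\ref{fifthPlace}, since $\pi(T_1)=\pi(T_2)$ (resp.\ $\pi(T_1)\leqslant\pi(T_2)$) holds for all $\pi\in\M{\Pi_\varphi}$ whenever $T_1=T_2$ (resp.\ $T_1\subseteq T_2$) is a positive atom.

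Next I would cut $\M{\Pi_\varphi}$ down to a polynomially bounded $\mathcal{A}$-ample subset $\Pi=\Pi_1\cup\Pi_2\cup\Pi_3$: $\Pi_1$ picks, for each atom in $\mathcal{A}_\varphi^{-}$, one place witnessing \ref{sixthPlace}--\ref{seventhPlace} (so $|\Pi_1|\leqslant|\mathcal{A}_\varphi^{-}|$); $\Pi_2$ consists of the place $\pi^{x}\defAs\pi_{\rho_x}$ at each individual variable $x$, where $\rho_x$ is the unique region containing $\M x$; and $\Pi_3$ picks one place $\pi_\rho$ with $\rho\subseteq\sigma$ for each region $\sigma$ of the coarser Euler's diagram $\mathcal{E}$ of $\Omega\cup\M\choice[\Omega]$, with $\Omega\defAs\{\M{T_1},\ldots,\M{T_k}\}$. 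Since $\mathcal{E}$ has at most $2^{k}$ regions, $|\Pi|\leqslant|\mathcal{A}|+|\mathsf{V}_0|+2^{k}$. Ampleness is preserved: the universal clauses hold because every member of $\Pi$ is some $\pi_\rho$, and the existential ones are inherited from $\Pi_1$; moreover $\M x$ lies in exactly one region, so $\pi^{x}$ is automatically the sole place of $\Pi$ at $x$.

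For the order-theoretic part, I would note that $\M\choice$ is a total choice satisfying $(\alpha)$ and $(\beta)$, hence its restriction $\M\choice\restrict{\Omega}$ trivially has the \textsf{WARP}-lifting property (witnessed by $\M\choice$ itself); Theorem~\ref{THM:lifting WARP} then yields a total Noetherian preorder $\lesssim$ on $\mathcal{E}$ satisfying \ref{Aabove}--\ref{Babove}. It remains to transport $\lesssim$ along the bijection between $\mathcal{E}$ and $\P=\Pi_\choice/\sim_\choice$. This bijection is obtained as follows: since $\bigcup\M\choice[\Omega]\subseteq\bigcup\Omega$, every region of $\mathcal{E}$ is a union of regions $\rho$ of $\M{\mathcal{R}_\varphi}$, each lying below some $T_i$, so the corresponding place is in $\Pi_\choice$; and two places of $\Pi_\choice$ sit in the same region of $\mathcal{E}$ exactly when they agree on all $T_i$ and all $\choice(T_i)$, i.e.\ exactly when they are $\sim_\choice$-equivalent. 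Defining $\precsim$ on $\P$ as the pullback of $\lesssim$, and using the translations ``$\sigma\subseteq\M{T_i}$'' $\leftrightarrow$ ``$\Pi'\subseteq T_i$'' and ``$\sigma\subseteq\M\choice(\M{T_i})$'' $\leftrightarrow$ ``$\Pi'\subseteq\choice(T_i)$'', together with $\env{\mathcal{E}}{\M{T_i}}=\{\sigma\in\mathcal{E}:\sigma\subseteq\M{T_i}\}$, conditions \ref{Aabove} and \ref{Babove} turn into \ref{AabovePrime} and \ref{BabovePrime}, finishing the argument.

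No individual step is deep here --- Theorem~\ref{THM:lifting WARP} carries the real weight --- so I expect the main obstacle to be purely organizational: one must verify that the three-part construction of $\Pi$ simultaneously keeps $\mathcal{A}$-ampleness, respects the cardinality bound, and leaves the place-at-a-variable assignment single-valued in the required sense, and that the identification $\mathcal{E}\cong\P$ is arranged so that the four order conditions transfer verbatim (in particular that the quotient $\Pi_\choice/\sim_\choice$ coincides with the family $\{\Pi_\sigma:\sigma\in\mathcal{E}\}$ used in the informal discussion).
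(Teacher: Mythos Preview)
Your proposal is essentially the paper's own argument: it is exactly the assembly of the paragraphs preceding the lemma (Euler's diagram of $\M{\terms}$, the three subfamilies $\Pi_{1},\Pi_{2},\Pi_{3}$, the appeal to Theorem~\ref{THM:lifting WARP} for $\lesssim$ on $\mathcal{E}$, and the transport of $\lesssim$ to $\precsim$ on $\P$ via the identification $\mathcal{E}\cong\Pi_{\choice}/\sim_{\choice}$). The one slip is your choice $\mathcal{A}\defAs\mathcal{A}_{\varphi}^{+}$: with the paper's Definition~\ref{places}, clauses \ref{fourthPlace}--\ref{seventhPlace} force $\mathcal{A}$ to be the set of \emph{disproved} atoms, and indeed the paper's discussion takes $\mathcal{A}=\mathcal{A}_{\varphi}^{-}$ (so that $|\Pi_{1}|\leqslant|\mathcal{A}_{\varphi}^{-}|$ and $\Pi$ is $\mathcal{A}_{\varphi}^{-}$-ample); with your choice the universal clauses \ref{fourthPlace}--\ref{fifthPlace} would demand agreement of all places on the \emph{negative} atoms, which is false.
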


Next we show that the conditions in the preceding lemma are also sufficient for the satisfiability of our $\BSTC$-formula $\varphi$ under the \textsf{WARP}-semantics. Thus, let $\Pi$, $\mathcal{A}$, $x \mapsto \pi^{x}$, $\Pi_{\choice}$, $\sim_{\choice}$, and $\P$ be such that the  conditions in Lemma~\ref{necessity2LSS} are satisfied. Let $U_{_{\Pi}}$ be any set of cardinality $|\Pi|$, and $\pi \mapsto a_{\pi}$ any injective map from $\Pi$ onto $U_{_{\Pi}}$. We define an interpretation $M_{_{\Pi}}$ over the variables in $\mathsf{V}_{0} \cup \mathsf{V}_{1}$ and the choice terms $\choice(T_{1}),~\ldots,~\choice(T_{k})$ occurring in $\varphi$, as if they were set variables, by putting:
\newcommand{\piM}[1]{#1^{\scriptscriptstyle M_{\Pi}}}
\begin{equation}\label{finalModel}
\begin{array}{rcll}
\piM x &\defAs& a_{\pi^{x}}&\text{for } x \in \mathsf{V}_{0}\\
\piM X &\defAs& \{a_{\pi} \in U_{_{\Pi}} : \pi \in \Pi \land \pi \subseteq X \}~~~~~&\text{for } X \in \mathsf{V}_{1}\\
\piM{(\choice(T_{i}))} &\defAs& \{a_{\pi} \in U_{_{\Pi}} : \pi \in \Pi \land \pi \subseteq \choice(T_{i}) \}~~~~~&\text{for } i =1,\ldots,k\,.
\end{array}
\end{equation}
Notice that the choice map $\choice$ is not interpreted by $M_{_{\Pi}}$. However, it is not hard to check that the set assignment $\model_{_{\Pi}} \defAs (U_{_{\Pi}},M_{_{\Pi}})$ satisfies exactly the atoms in $\mathcal{A}$, provided that choice terms in $\varphi$ are regarded as set variables with no internal structure, rather than as compound terms. Indeed, after putting $\piM \pi \defAs \{a_{\pi}\}$, for $\pi \in \Pi$, we have $\piM T = \bigcup_{\pi \subseteq T} \piM \pi$, for every $T \in \terms$. Thus, if $T_{1} = T_{2}$ does not belong to $\mathcal{A}$, then by condition \ref{fourthPlace} we have $\piM{T_{1}} = \piM{T_{2}}$, whereas if $T_{1} = T_{2}$  belongs to $\mathcal{A}$, then by  \ref{sixthPlace} we have $\piM{T_{1}} \neq \piM{T_{2}}$. Analogously for atoms $T_{1} \subseteq T_{2}$ in $\varphi$.
Hence, by the promisingness of $\mathcal{A}$, $\model_{_{\Pi}}$ satisfies also $\varphi$.

Next we define $M_{_{\Pi}}$ also over the choice map $\choice$ in such a way that $\piM \choice$ is a total choice over $U$ that satisfies \textsf{WARP}, and $\model_{_{\Pi}} \models \varphi$ holds.
Let $\Omega_{_{\Pi}} \defAs \{\piM{T_{1}},\ldots,\piM{T_{k}}\}$. We begin by defining $\piM{\choice}$ over $\Omega_{_{\Pi}}$ in the most natural way, namely by putting $\piM{\choice}(\piM{T_{i}}) \defAs \piM{(\choice(T_{i}))}$, for $i = 1,\ldots,k$. By the choice and the single-valuedness conditions in $\varphi$, $\piM{\choice}$ so defined is a choice over the domain $\Omega_{_{\Pi}}$. In addition, it can be checked that the existence of a total Noetherian preorder on $\P$ satisfying conditions \ref{AabovePrime} and \ref{BabovePrime} yield the existence of a total Noetherian preorder on the collection of the Euler's regions of $\Omega_{_{\Pi}} \cup \piM{\choice}[\Omega_{_{\Pi}}]$ satisfying conditions \ref{aTheorem3} and \ref{bTheorem3} of Theorem~\ref{THM:lifting WARP}. Thus, the latter theorem readily implies that $\piM{\choice}$ has the \textsf{WARP}-lifting property, so that it can be extended to a total choice on $U$ satisfying \textsf{WARP}, proving that $\varphi$ is satisfiable under the \textsf{WARP}-semantics.

Together with Lemma~\ref{necessity2LSS}, the above argument yields the following result:
\begin{theorem}\label{WARPTheorem}
Let $\varphi$ be a complete $\BSTC$-formula in choice-flat form, $\mathsf{V}_{0}$ the set of individual variables occurring in it, and $\choice(T_{1}),~\ldots,~\choice(T_{k})$ the choice terms occurring in it. Then $\varphi$ is satisfiable under the \textsf{WARP}-semantics if and only if there exist 
(i) an $\mathcal{A}$-ample set $\Pi$ of places for $\varphi$ such that $|\Pi| \leq |\mathcal{A}| + |\mathsf{V}_{0}| + 2^{k}$, for some promising set $\mathcal{A}$ of atoms in $\varphi$, (ii) a map $x \mapsto \pi^{x}$ from $\mathsf{V}_{0}$ into $\Pi$, such that $\pi^{x}$ is the sole place in $\Pi$ at the variable $x$, for $x \in \mathsf{V}_{0}$, and (iii) a total Noetherian preorder $\precsim$ on $\P$, 
where $\P \defAs \Pi_{\choice}/\sim_{\choice}$ (with $\Pi_{\choice} \defAs \big\{\pi \in \Pi : \pi \subseteq T_{i} \text{ or } \pi \subseteq \choice(T_{i})\text{, for some } i \in \{1,\ldots,k\} \big\}\,,$ and $\sim_{\choice}$  the equivalence relation on $\Pi_{\choice}$ such that
$\pi \sim_{\choice} \pi'  \Longleftrightarrow  \pi(T_{i}) = \pi'(T_{i}) \text{ ~and~ } \pi(\choice(T_{i})) = \pi'(\choice(T_{i}))\text{\,,~ for } i = 1,\ldots,k$) such that the above conditions \ref{AabovePrime} and \ref{BabovePrime} are satisfied for all $\Pi',\Pi'' \in \P$ and $i \in \{1,\ldots,k\}$.
\eod
\end{theorem}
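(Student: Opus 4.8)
The proof merely assembles the analysis already carried out before the statement, so the plan splits along the two implications. For \emph{necessity} there is nothing new to do: this is exactly Lemma~\ref{necessity2LSS}. Concretely, from a model $\model=(U,M)$ with $\model\models_{\textsf{WARP}}\varphi$ one sets $\mathcal{A}:=\mathcal{A}_{\varphi}^{-}$, takes the $\mathcal{A}$-ample family $\Pi:=\Pi_{1}\cup\Pi_{2}\cup\Pi_{3}$ of places of size at most $|\mathcal{A}|+|\mathsf{V}_{0}|+2^{k}$ extracted from the Euler diagram $\M{\mathcal{R}_{\varphi}}$, takes the map $x\mapsto\pi^{x}$, and transports the total Noetherian preorder on the Euler regions of $\Omega\cup\M\choice[\Omega]$ (with $\Omega:=\{\M{T_{1}},\ldots,\M{T_{k}}\}$) supplied by Theorem~\ref{THM:lifting WARP} --- applicable since $\M\choice$ itself is a \textsf{WARP}-lifting of $\M\choice\restrict{\Omega}$ --- to a total Noetherian preorder $\precsim$ on $\P=\Pi_{\choice}/\sim_{\choice}$ satisfying \ref{AabovePrime} and \ref{BabovePrime}. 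So the necessity half is granted.

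For \emph{sufficiency} I would reconstruct a model from the data (i)--(iii), along the lines sketched right after Lemma~\ref{necessity2LSS}. Fix a set $U_{_{\Pi}}$ of cardinality $|\Pi|$ and a bijection $\pi\mapsto a_{\pi}$ onto it, and let $M_{_{\Pi}}$ be the interpretation of (\ref{finalModel}), which provisionally reads every choice term $\choice(T_{i})$ as an uninterpreted set variable. The first step is a routine induction on set terms establishing $\piM T=\bigcup_{\pi\subseteq T}\{a_{\pi}\}$ for all $T\in\terms$: clause \ref{zeroPlace} settles $\varnothing$, clauses \ref{firstPlace}--\ref{thirdPlace} handle the Boolean operations, the defining clauses of (\ref{finalModel}) handle set variables and the terms $\choice(T_{i})$, and the fact that $\pi^{x}$ is the sole place in $\Pi$ at $x$ handles singletons. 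Together with \ref{fourthPlace}--\ref{seventhPlace} and the injectivity of $\pi\mapsto a_{\pi}$ this shows that $\model_{_{\Pi}}:=(U_{_{\Pi}},M_{_{\Pi}})$ satisfies exactly the atoms of $\varphi$ prescribed by $\mathcal{A}$, whence by promisingness $\model_{_{\Pi}}\models\varphi$ (choice terms still read as set variables).

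The second, substantive, step is to interpret $\choice$ so that $\model_{_{\Pi}}\models_{\textsf{WARP}}\varphi$. Put $\Omega_{_{\Pi}}:=\{\piM{T_{1}},\ldots,\piM{T_{k}}\}$ and define $\piM\choice$ on $\Omega_{_{\Pi}}$ by $\piM\choice(\piM{T_{i}}):=\piM{(\choice(T_{i}))}$. Because $\varphi$ is complete, its choice conditions give $\varnothing\neq\piM{(\choice(T_{i}))}\subseteq\piM{T_{i}}$ and its single-valuedness conditions give $\piM{T_{i}}=\piM{T_{j}}\Rightarrow\piM{(\choice(T_{i}))}=\piM{(\choice(T_{j}))}$, so $\piM\choice$ is a well-defined partial choice on $\Omega_{_{\Pi}}$. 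It then remains to verify that $\piM\choice$ has the \textsf{WARP}-lifting property; Theorem~\ref{THM:lifting WARP} extends it to a total choice on $U_{_{\Pi}}$ satisfying \textsf{WARP}, which completes $M_{_{\Pi}}$ and --- since the interpretation of each term $\choice(T_{i})$ is unchanged --- gives $\model_{_{\Pi}}\models_{\textsf{WARP}}\varphi$.

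The \emph{main obstacle} is this last verification, i.e.\ turning the combinatorial preorder $\precsim$ on $\P$ into the geometric preorder required by Theorem~\ref{THM:lifting WARP} on the Euler diagram $\mathcal{E}_{_{\Pi}}$ of $\Omega_{_{\Pi}}\cup\piM\choice[\Omega_{_{\Pi}}]$. The key point is that $\pi\mapsto a_{\pi}$ descends to a bijection between the $\sim_{\choice}$-classes in $\Pi_{\choice}$ and the regions of $\mathcal{E}_{_{\Pi}}$: since $\piM{T_{i}}=\bigcup_{\pi\subseteq T_{i}}\{a_{\pi}\}$ and $\piM{(\choice(T_{i}))}=\bigcup_{\pi\subseteq\choice(T_{i})}\{a_{\pi}\}$, the points $a_{\pi},a_{\pi'}$ lie in the same region of $\mathcal{E}_{_{\Pi}}$ exactly when $\pi\sim_{\choice}\pi'$, and (using the choice conditions) every region contains such a point. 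Under this identification $\pi\subseteq T_{i}$ is constant on a class, so $\Pi^{*}\ni\in T_{i}$ and $\Pi^{*}\subseteq T_{i}$ coincide, the set $\{\Pi^{*}\in\P:\Pi^{*}\ni\in T_{i}\}$ matches the envelope $\env{\mathcal{E}_{_{\Pi}}}{\piM{T_{i}}}$, and conditions \ref{AabovePrime}, \ref{BabovePrime} for $\precsim$ become verbatim conditions \ref{aTheorem3}, \ref{bTheorem3} for the transported preorder; totality and Noetherianity transfer because $\P$ and $\mathcal{E}_{_{\Pi}}$ are finite. The only loose end is the places in $\Pi\setminus\Pi_{\choice}$: they correspond to elements of $U_{_{\Pi}}$ outside $\bigcup\Omega_{_{\Pi}}$, play no role in $\mathcal{E}_{_{\Pi}}$, and are harmlessly absorbed when the \textsf{WARP}-lifting of $\piM\choice$ is taken over all nonempty subsets of $U_{_{\Pi}}$.
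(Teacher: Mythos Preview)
Your proposal is correct and follows essentially the same approach as the paper: necessity is Lemma~\ref{necessity2LSS} verbatim, and sufficiency is the model construction (\ref{finalModel}) followed by an appeal to Theorem~\ref{THM:lifting WARP}. The one place where you add genuine content is the transport of $\precsim$ from $\P$ to the Euler diagram $\mathcal{E}_{_{\Pi}}$, which the paper dismisses with ``it can be checked''; your explicit bijection $\Pi_{\choice}/\!\sim_{\choice}\;\leftrightarrow\;\mathcal{E}_{_{\Pi}}$ (via $a_{\pi}\in\piM{T_{i}}\Leftrightarrow\pi\subseteq T_{i}$ and similarly for $\choice(T_{i})$), together with the observation that $\Pi^{*}\subseteq T_{i}$ and $\Pi^{*}\ni\in T_{i}$ coincide on classes so that \ref{AabovePrime}--\ref{BabovePrime} literally become \ref{aTheorem3}--\ref{bTheorem3}, is exactly the argument the paper has in mind.
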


We have already observed that the satisfiability problem for $\BSTC$-formulae under the \textsf{WARP}-semantics is  \NP-hard. In addition, the previous theorem implies at once that it belongs to \NEXP. However, if we restrict to $\BSTC$-formulae with a constant number of choice terms, the disequality $|\Pi| \leq |\mathcal{A}| + |\mathsf{V}_{0}| + 2^{k}$ in Theorem~\ref{WARPTheorem} yields $|\Pi| = \mathcal{O}(|\mathcal{A}| + |\mathsf{V}_{0}|) =\mathcal{O}(|\varphi|)$, thereby providing the following complexity result:
\begin{theorem}
Under the \textsf{WARP}-semantics, the satisfiability problem for $\BSTC$-formulae with $\mathcal{O}(1)$ distinct choice terms is \NP-complete.\eod
\end{theorem}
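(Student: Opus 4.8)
The plan is to leverage Theorem~\ref{WARPTheorem} as a characterization and then argue that, under the restriction to $\mathcal{O}(1)$ choice terms, the characterizing conditions can be checked by a nondeterministic polynomial-time procedure; \NP-hardness is already in hand since propositional satisfiability reduces in linear time to $\BSTC$-satisfiability under any of our semantics. So the entire content of the proof is membership in \NP. First I would observe that, with $k = \mathcal{O}(1)$, the bound $|\Pi| \leq |\mathcal{A}| + |\mathsf{V}_{0}| + 2^{k}$ from Theorem~\ref{WARPTheorem} becomes $|\Pi| = \mathcal{O}(|\mathcal{A}| + |\mathsf{V}_{0}|) = \mathcal{O}(|\varphi|)$, so the witness set of places has polynomial cardinality, and each place, being a Boolean map on $\terms$, has a description of size $\mathcal{O}(|\terms|) = \mathcal{O}(|\varphi|)$.

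Next I would describe the guessing phase of the \NP-procedure. Given a complete $\BSTC$-formula $\varphi$ in choice-flat form (recall this normalization is obtainable in linear time and preserves equisatisfiability), the machine guesses: (i) a set $\mathcal{A}$ of atoms of $\varphi$; (ii) a set $\Pi$ of at most $|\mathcal{A}| + |\mathsf{V}_{0}| + 2^{k}$ Boolean maps on $\terms$; (iii) the assignment $x \mapsto \pi^{x}$ from $\mathsf{V}_{0}$ into $\Pi$; and (iv) a total preorder $\precsim$ on the quotient $\P = \Pi_{\choice}/\!\sim_{\choice}$, presented as a list of the $\precsim$-equivalence classes in nondecreasing order (equivalently, a ranking function from $\P$ to an initial segment of $\mathbb{N}$). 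All four objects have size polynomial in $|\varphi|$; in particular $|\P| \leq |\Pi_{\choice}| \leq |\Pi|$ is polynomial, and note that on a finite set a total preorder is automatically Noetherian, so the Noetherian requirement imposes no separate verification burden.

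Then I would describe the verification phase, checking in deterministic polynomial time that: $\mathcal{A}$ is promising for $\varphi$ (evaluate the propositional skeleton $\mathsf{P}_{\varphi}$ under the valuation induced by $\mathcal{A}$ — linear in $|\varphi|$); every $\pi \in \Pi$ is a genuine place, i.e.\ satisfies \ref{zeroPlace}--\ref{thirdPlace} for each compound set term of $\varphi$ (one pass over the syntax tree per place, so $\mathcal{O}(|\Pi|\cdot|\varphi|)$ total); $\Pi$ is $\mathcal{A}$-ample, i.e.\ the universal clauses \ref{fourthPlace}--\ref{fifthPlace} hold for all $\pi\in\Pi$ and all atoms not in $\mathcal{A}$, and for each atom in $\mathcal{A}$ some $\pi\in\Pi$ witnesses \ref{sixthPlace}--\ref{seventhPlace} (again polynomial); $\pi^{x}$ is the unique place in $\Pi$ at $x$, i.e.\ $\pi^{x}(\{x\}) = \true$ while $\pi(\{x\}) = \false$ for every other $\pi \in \Pi$ (this forces the singleton terms to behave correctly); and finally that $\precsim$ is a total preorder on $\P$ respecting \ref{AabovePrime} and \ref{BabovePrime}. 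The last check is the only slightly delicate one: condition \ref{AabovePrime} is a polynomial number of inequalities $\Pi' \precsim \Pi''$ ranging over pairs of classes and indices $i \in \{1,\dots,k\}$, directly readable off the guessed ranking; condition \ref{BabovePrime} requires, for each $i$, computing the set $\{\Pi^{*} \in \P : \Pi^{*} \ni\in T_{i}\}$, extracting its $\precsim$-maximal elements, and checking each is contained in $\choice(T_{i})$ — all elementary given the explicit description of $\precsim$. If every check succeeds, the machine accepts.

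Correctness follows from Theorem~\ref{WARPTheorem}: the guessed data, when all checks pass, are exactly the objects whose existence is equivalent to satisfiability of $\varphi$ under the \textsf{WARP}-semantics; conversely, if $\varphi$ is satisfiable, such data exist and (by the cardinality bound, polynomial when $k=\mathcal{O}(1)$) fit within the resources of the nondeterministic polynomial-time machine. Hence the problem lies in \NP, and combined with \NP-hardness it is \NP-complete. The main obstacle I anticipate is purely bookkeeping rather than conceptual: one must be careful that the passage from a guessed abstract preorder $\precsim$ on $\P$ back to a preorder on the Euler's regions of $\Omega_{_{\Pi}} \cup \piM{\choice}[\Omega_{_{\Pi}}]$ — the step invoked in the sufficiency argument preceding Theorem~\ref{WARPTheorem} — really does preserve conditions \ref{aTheorem3}--\ref{bTheorem3} of Theorem~\ref{THM:lifting WARP}, so that the \textsf{WARP}-lifting genuinely exists; but this is established in the text before the statement and may be cited wholesale.
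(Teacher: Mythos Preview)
Your proposal is correct and follows essentially the same approach as the paper: both invoke Theorem~\ref{WARPTheorem} and observe that when $k=\mathcal{O}(1)$ the bound $|\Pi|\le |\mathcal{A}|+|\mathsf{V}_{0}|+2^{k}$ collapses to $|\Pi|=\mathcal{O}(|\varphi|)$, yielding \NP-membership, with \NP-hardness already established via propositional SAT. You spell out the guess-and-check details (and the harmless observation that finite total preorders are automatically Noetherian) that the paper leaves implicit, but the underlying argument is the same.
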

\medskip
As a by-product, Theorem~\ref{WARPTheorem} yields a solution to the satisfiability problem for $BSTC^{-}$-formulae. Indeed, in the case of $BSTC^{-}$-formulae, Theorem~\ref{WARPTheorem} becomes:

\begin{theorem}\label{BSTC-Theorem}
Let $\varphi$ be a $\BSTC^{-}$-formula, and $\mathsf{V}_{0}$ the set of individual variables occurring in it. Then $\varphi$ is satisfiable if and only if there exist (i) an $\mathcal{A}$-ample set $\Pi$ of places for $\varphi$ such that $|\Pi| \leq |\mathcal{A}| + |\mathsf{V}_{0}|$, for some promising set $\mathcal{A}$ of atoms in $\varphi$, and (ii) a map $x \mapsto \pi^{x}$ from $\mathsf{V}_{0}$ into $\Pi$, such that $\pi^{x}$ is the sole place in $\Pi$ at the variable $x$, for $x \in \mathsf{V}_{0}$.
\eod
\end{theorem}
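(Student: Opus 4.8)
The plan is to obtain Theorem~\ref{BSTC-Theorem} as the instance $k=0$ of Theorem~\ref{WARPTheorem}, by tracking how the choice-related machinery of the latter degenerates in the absence of choice terms. First I would note that a $\BSTC^{-}$-formula $\varphi$ has, by definition, no choice terms, so it is vacuously in choice-flat form and vacuously complete (there are no choice or single-valuedness conditions to adjoin); moreover, since $\choice$ does not occur in $\varphi$, the truth value of $\varphi$ under a set assignment $(U,M)$ does not depend on $\M\choice$ at all, whence $\varphi$ is satisfiable (under any of our semantics) if and only if it is satisfiable under the \textsf{WARP}-semantics---indeed, any unrestricted model of $\varphi$ becomes a \textsf{WARP}-model once $\choice$ is reinterpreted as the menu-wise identity $A \mapsto A$, which satisfies \textsf{WARP}. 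Thus Theorem~\ref{WARPTheorem} applies to $\varphi$ with $k=0$, and it remains only to simplify its conclusion.

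For the \emph{necessity} direction I would replay, with $k=0$, the derivation preceding Lemma~\ref{necessity2LSS}. Then $\Omega = \{\M{T_{1}},\ldots,\M{T_{k}}\} = \emptyset$, so the family $\Omega \cup \M\choice[\Omega]$ is empty, its Euler's diagram $\mathcal{E}$ is empty, and therefore the auxiliary set of places $\Pi_{3}$---a transversal selecting one place in each region of $\mathcal{E}$, hence the source of the summand $2^{k}$ in the bound of Theorem~\ref{WARPTheorem}---is empty as well. Consequently the $\mathcal{A}$-ample set of places extracted from a model of $\varphi$ may be taken to be $\Pi_{1} \cup \Pi_{2}$, whence $|\Pi| \le |\Pi_{1}| + |\Pi_{2}| \le |\mathcal{A}| + |\mathsf{V}_{0}|$, which is exactly the sharpened cardinality bound of the statement; the map $x \mapsto \pi^{x}$ is produced precisely as before. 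Moreover $\Pi_{\choice} = \emptyset$ and so $\mathcal{P} = \emptyset$; thus condition~(iii) of Theorem~\ref{WARPTheorem}---the existence of a total Noetherian preorder on $\mathcal{P}$ satisfying \ref{AabovePrime}--\ref{BabovePrime}---holds vacuously, witnessed by the empty relation, and may simply be omitted, leaving conditions (i) and (ii).

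For the \emph{sufficiency} direction I would take $\Pi$, $\mathcal{A}$ and $x \mapsto \pi^{x}$ as in the statement and construct the set assignment of~(\ref{finalModel}), whose clause concerning choice terms is now empty. Exactly as in the proof of Theorem~\ref{WARPTheorem}, conditions \ref{zeroPlace}--\ref{seventhPlace} guarantee that this assignment realizes the atoms of $\varphi$ in the manner prescribed by $\mathcal{A}$, so that, by the promisingness of $\mathcal{A}$, it satisfies $\varphi$. The concluding step of that proof---interpreting $\choice$ over $\Omega_{\Pi}$ and invoking Theorem~\ref{THM:lifting WARP} to lift it to a total \textsf{WARP}-choice---is now vacuous, since $\Omega_{\Pi} = \emptyset$: one simply lets $\M\choice$ be any total choice on the resulting (finite) universe, e.g.\ again the menu-wise identity, obtaining a $\BSTC$-model for $\varphi$.

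I do not expect any genuine obstacle: all the substantive work is already packaged in Theorem~\ref{WARPTheorem}, and, below it, in the \textsf{WARP}-lifting characterization of Theorem~\ref{THM:lifting WARP} together with the place-based completeness argument; the present statement is a corollary once one checks that the choice apparatus collapses for $k=0$. The only mildly delicate point deserving a remark is the degenerate case where $\mathcal{A}$ and $\mathsf{V}_{0}$ are both empty, so that the conditions are met vacuously by $\Pi = \emptyset$, which is not the place set of any set assignment with a nonempty universe; there one argues directly that such a $\varphi$ is a propositional combination of atoms all rendered true by interpreting every set variable as the empty set over a one-element universe, so the stated equivalence still holds.
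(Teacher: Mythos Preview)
Your proposal is correct and follows exactly the paper's own approach: Theorem~\ref{BSTC-Theorem} is stated there as the immediate specialization of Theorem~\ref{WARPTheorem} to the case $k=0$, with no separate proof beyond the introductory sentence. Your explicit check that $\Omega$, $\mathcal{E}$, $\Pi_{3}$, $\Pi_{\choice}$ and $\mathcal{P}$ all collapse to $\emptyset$ (so that condition~(iii) becomes vacuous and the bound sharpens to $|\mathcal{A}|+|\mathsf{V}_{0}|$), together with your remark on the degenerate empty-$\Pi$ case, simply spells out details the paper leaves implicit.
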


Hence, we also have:
\begin{theorem}\label{NPcompl2LSS}
The satisfiability problem for $\BSTC^{-}$-formulae is \NP-complete.\eod
\end{theorem}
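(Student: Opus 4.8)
The plan is to prove the two directions of \NP-completeness separately, relying on the characterization just established in Theorem~\ref{BSTC-Theorem}. For \NP-hardness I would only need to spell out the reduction already alluded to: given a propositional formula $P$ over variables $p_{1},\dots,p_{n}$, replace each $p_{i}$ by the $\BSTC^{-}$-atom $X_{i}=\varnothing$ for fresh set variables $X_{1},\dots,X_{n}$; since $U$ is nonempty, a truth assignment to the $p_{i}$'s corresponds exactly to a choice of the $\M{X_{i}}$'s among $\emptyset$ and nonempty subsets of $U$, so $P$ is satisfiable if and only if the resulting $\BSTC^{-}$-formula is. This transformation is computable in linear time, so every problem in \NP\ reduces to the satisfiability of $\BSTC^{-}$-formulae.

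For membership in \NP, I would describe a nondeterministic polynomial-time procedure that, on input a $\BSTC^{-}$-formula $\varphi$, guesses the three objects whose existence Theorem~\ref{BSTC-Theorem} asserts to be equivalent to satisfiability: (i) a subset $\mathcal{A}$ of the atoms occurring in $\varphi$; (ii) a set $\Pi$ of Boolean maps on $\terms$ with $|\Pi| \leq |\mathcal{A}| + |\mathsf{V}_{0}|$; and (iii) a function $x \mapsto \pi^{x}$ from $\mathsf{V}_{0}$ into $\Pi$. The essential point to record is that this certificate has size polynomial in $|\varphi|$: the number of atoms of $\varphi$ and the cardinality $|\terms|$ are both $O(|\varphi|)$, so each Boolean map on $\terms$ is described by $O(|\varphi|)$ bits, and the bound $|\Pi| \leq |\mathcal{A}| + |\mathsf{V}_{0}| = O(|\varphi|)$ keeps the whole guess of size $O(|\varphi|^{2})$. (For a $\BSTC^{-}$-formula $k=0$, so no choice-flattening or completion preprocessing is required.)

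Next I would verify in deterministic polynomial time that the guessed data satisfy all the requirements of Theorem~\ref{BSTC-Theorem}: that $\mathcal{A}$ is promising, by evaluating the propositional skeleton $\mathsf{P}_{\varphi}$ under the valuation sending the propositional variables of the atoms in $\mathcal{A}$ to \textsf{true} and the rest to \textsf{false}; that every $\pi \in \Pi$ is a place, i.e.\ satisfies conditions \ref{zeroPlace}--\ref{thirdPlace} (a single traversal of the set terms of $\varphi$ suffices, as $\varphi$ being choice-free means only the clauses for $\varnothing$, $\cup$, $\cap$, $\setminus$ are relevant); that $\Pi$ is $\mathcal{A}$-ample, checking the universal clauses \ref{fourthPlace}--\ref{fifthPlace} over all relevant pairs (place, atom) and exhibiting, for each atom in $\mathcal{A}$, a witnessing place for the existential clauses \ref{sixthPlace}--\ref{seventhPlace}; and finally that $\pi^{x}(\{x\}) = \true$ for every $x \in \mathsf{V}_{0}$ and that $\pi^{x}$ is the only element of $\Pi$ that is a place at the variable $x$. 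Each of these checks inspects the guessed data and the syntax tree of $\varphi$ a polynomial number of times.

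Since the guess is polynomially bounded and the verification runs in polynomial time, the satisfiability problem for $\BSTC^{-}$-formulae lies in \NP; together with the \NP-hardness above this yields \NP-completeness. I do not anticipate a real obstacle: all the conceptual weight has already been absorbed into Theorem~\ref{BSTC-Theorem}, and the only delicate point is the bookkeeping confirming that the certificate is polynomial-size, which rests entirely on the linear bound $|\Pi| \leq |\mathcal{A}| + |\mathsf{V}_{0}|$ — precisely the reason that bound was engineered into the statement of Theorem~\ref{BSTC-Theorem}.
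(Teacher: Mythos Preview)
Your proposal is correct and follows exactly the route the paper intends: \NP-hardness by the propositional-to-$\BSTC^{-}$ reduction already mentioned in the text, and \NP-membership by reading off a polynomial-size certificate (promising $\mathcal{A}$, bounded $\Pi$, the map $x\mapsto\pi^{x}$) from Theorem~\ref{BSTC-Theorem} and checking it in polynomial time. The paper states the theorem without proof because all the work has indeed been absorbed into Theorem~\ref{BSTC-Theorem}; your write-up simply spells out the certificate-size bookkeeping that the paper leaves implicit.
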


\subsubsection{Unrestricted semantics}

%
%
As above, let $\varphi$ be a complete $\BSTC$-formula in choice flat-form.
Plainly, if $\varphi$ is satisfiable under unrestricted semantics, so is its \emph{$\BSTC^{-}$-reduction} obtained from $\varphi$ by regarding the choice terms as set variables with no internal structure.

Conversely, let us assume that the $\BSTC^{-}$-reduction $\varphi_{1}$ of $\varphi$ is satisfiable, and let $\model_{1} = (U,M_{1})$ be a model for $\varphi_{1}$. 
We define a total choice correspondence $c \colon \powPlus \rightrightarrows U$ on $U$ by putting, for every $A \in \powPlus$,
\begin{equation}\label{defMapc}
c(A) \defAs \begin{cases}
A & \text{if } A \notin \{ \M T_{1},\ldots, \M T_{k}\}\\
\M{(\choice(T_{i}))} & \text{if } A = \M T_{i}, \text{ for some } i = 1,\ldots,k.
\end{cases}
\end{equation}
Observe that, by the single-valuedness conditions present in $\varphi_{1}$, if $\M T_{i} = \M T_{j}$, for distinct $i$ and $j$, then $\M{(\choice(T_{i}))} = \M{(\choice(T_{j}))}$. Hence, the map $c$ is well-defined. In addition, the choice conditions yield that $c$ is indeed a total choice on $U$.

Let $\model = (U,M)$ be the set assignment differing from $\model_{1}$ only on the interpretation of the choice map symbol $\choice$, for which we have $\M \choice \defAs c$. Plainly, $\model$ coincides with $\model_{1}$ on all the choice-free terms in $\varphi$. In addition, since $c(\M T_{i}) = \M{(\choice(T_{i}))}$ for $i=1,\ldots,k$, the assignment $\model$ coincides with $\model_{1}$ on the remaining terms in $\varphi$ as well. Thus $\model \models \varphi$, proving that $\varphi$ is satisfiable when it admits a satisfiable $\BSTC^{-}$-reduction.

We have thus proved:
\begin{lemma}
Under unrestricted semantics, a complete $\BSTC$-formula in choice-flat form is satisfiable if and only if it admits a satisfiable $\BSTC^{-}$-reduction.\eod
\end{lemma}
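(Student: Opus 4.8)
The plan is to prove the two directions of the biconditional separately, with the backward direction the one requiring care. For the forward direction, suppose the complete choice-flat formula $\varphi$ is satisfiable under the unrestricted semantics, witnessed by a set assignment $\model=(U,M)$. The $\BSTC^{-}$-reduction $\varphi_{1}$ of $\varphi$ is syntactically identical to $\varphi$ except that each choice term $\choice(T_{i})$ is demoted to a fresh set variable; since $\varphi$ is in choice-flat form the arguments $T_{i}$ are choice-free, so this replacement is unambiguous. I would then read off from $\model$ an interpretation for $\varphi_{1}$ by assigning to the new variable standing for $\choice(T_{i})$ the set $\M{(\choice(T_{i}))}$, and leaving all other variables as in $M$. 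Since $\model$ already interprets every choice-free term exactly as $\varphi_{1}$ demands and the value of each new variable is by construction $\M{(\choice(T_{i}))}$, the resulting assignment satisfies precisely the atoms of $\varphi_{1}$ that $\model$ satisfies in $\varphi$, hence $\varphi_{1}$ itself.

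For the backward direction, assume $\varphi_{1}$ has a model $\model_{1}=(U,M_{1})$. The heart of the argument is the construction of a total choice correspondence $c\colon\powPlus\rightrightarrows U$ that realizes the values already picked by $\model_{1}$ on the menus $\Mone{T_{1}},\dots,\Mone{T_{k}}$ and is the identity elsewhere: put $c(A)\defAs A$ whenever $A\notin\{\Mone{T_{1}},\dots,\Mone{T_{k}}\}$, and $c(\Mone{T_{i}})\defAs\Mone{(\choice(T_{i}))}$ otherwise. This is exactly where completeness of $\varphi$ is used. The single-valuedness conditions $T_{i}=T_{j}\implies\choice(T_{i})=\choice(T_{j})$, true in $\model_{1}$, ensure that the second clause never assigns conflicting values when $\Mone{T_{i}}=\Mone{T_{j}}$, so $c$ is well defined; the choice conditions $\varnothing\neq\choice(T_{i})\land\choice(T_{i})\subseteq T_{i}$ ensure that each $\Mone{T_{i}}$ is a nonempty subset of $U$ and that $\Mone{(\choice(T_{i}))}$ is a nonempty subset of it, so that $c$ is contractive and never empty-valued, i.e.\ a genuine total choice on $U$.

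Finally I would let $\model=(U,M)$ be obtained from $\model_{1}$ by additionally setting $\M{\choice}\defAs c$ and leaving everything else untouched. Since $\model$ agrees with $\model_{1}$ on all choice-free terms, and $c(\Mone{T_{i}})=\Mone{(\choice(T_{i}))}$ by construction, a straightforward induction on the structure of set terms shows $\M{T}=\Mone{T}$ for every $T\in\terms$ (under the identification of each $\choice(T_{i})$ with the corresponding fresh variable of $\varphi_{1}$); consequently $\model$ satisfies exactly the atoms $\model_{1}$ does, and therefore $\model\models\varphi$. The only genuine obstacle in the whole argument is verifying that the map $c$ defined above really is a choice correspondence — both well-definedness and the never-empty-valued, contractive requirements — and this is precisely the reason the choice and single-valuedness conditions were folded into $\varphi$ when we assumed it complete.
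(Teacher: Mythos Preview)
Your proof is correct and follows essentially the same route as the paper: the forward direction is immediate by reinterpreting choice terms as set variables, and for the converse you build the same total choice $c$ (identity off $\{\Mone{T_{1}},\dots,\Mone{T_{k}}\}$, otherwise $\Mone{(\choice(T_{i}))}$), invoking the single-valuedness conditions for well-definedness and the choice conditions for contractiveness and nonemptiness, exactly as the paper does. Your explicit remark that the choice conditions force each $\Mone{T_{i}}$ to be nonempty, and your structural induction to match term values, are minor elaborations of steps the paper leaves implicit.
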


In view of Theorem~\ref{NPcompl2LSS}, we can conclude: 
\begin{theorem}
Under unrestricted semantics, the satisfiability problem for $\BSTC$-formulae is \NP-complete.\eod
\end{theorem}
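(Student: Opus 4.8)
The plan is to combine the \NP-hardness already noted for this problem with a polynomial-time many-one reduction to the satisfiability problem for $\BSTC^{-}$-formulae, whose membership in \NP\ is Theorem~\ref{NPcompl2LSS}.

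First I would dispose of \NP-hardness, which is immediate: any propositional formula is in particular a $\BSTC$-formula (built from set variables by the propositional connectives, with no choice terms), and it is propositionally satisfiable if and only if it is $\BSTC$-satisfiable under the unrestricted semantics; thus \textsc{Sat} reduces to our problem in linear time. This observation was already recorded at the start of Section~\ref{appendixDecProc}.

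For membership in \NP\ I would proceed in stages, each a polynomial-time transformation preserving satisfiability under the unrestricted semantics. Given an arbitrary $\BSTC$-formula $\varphi$, first apply the preprocessing described earlier in this section: flatten nested choice terms, by repeatedly replacing a choice term $\choice(T)$ with $T$ choice-free lying inside the scope of another $\choice$ by a fresh set variable $X_{T}$ together with a new conjunct $X_{T}=\choice(T)$, until the formula is in choice-flat form; this step is equisatisfiability-preserving and increases the size only linearly. Then append the choice conditions and the single-valuedness conditions to obtain a \emph{complete} formula; this may cause a quadratic increase in size but leaves the number of set terms unchanged, and, being true in every $\BSTC$-assignment, it does not disrupt satisfiability. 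Next, by the Lemma immediately preceding this theorem, the resulting complete choice-flat formula is satisfiable under the unrestricted semantics if and only if its $\BSTC^{-}$-reduction --- obtained by regarding each choice term as a structureless set variable --- is satisfiable, and this reduction is plainly computable in polynomial time. Composing these transformations yields a polynomial-time reduction of $\BSTC$-satisfiability under the unrestricted semantics to $\BSTC^{-}$-satisfiability, which lies in \NP\ by Theorem~\ref{NPcompl2LSS}; hence the former is in \NP\ as well, and together with \NP-hardness this establishes \NP-completeness.

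I do not expect a genuine obstacle here: the argument is a routine chaining of results already proved in the excerpt. The only point demanding a little care is verifying that the passage to complete choice-flat form is genuinely polynomial --- the potentially quadratic blow-up from the single-valuedness conditions being harmless --- and that none of the preprocessing steps alter satisfiability under the unrestricted semantics, both of which were checked in the discussion leading up to the Lemma.
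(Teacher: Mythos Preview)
Your proposal is correct and follows essentially the same route as the paper: reduce to complete choice-flat form, invoke the preceding Lemma to pass to the $\BSTC^{-}$-reduction, and appeal to Theorem~\ref{NPcompl2LSS} for \NP-membership, with \NP-hardness coming from propositional \textsc{Sat}. One cosmetic quibble: bare propositional variables are not literally $\BSTC$-atoms (atoms must have the form $T_{1}=T_{2}$ or $T_{1}\subseteq T_{2}$), so the hardness reduction should map each propositional variable to an atom such as $X_{i}=\varnothing$---but since you explicitly defer to the paper's observation at the start of Section~\ref{appendixDecProc}, this does not affect your argument.
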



\subsubsection{$(\beta)$-semantics}
Let us now assume that our complete $\BSTC$-formula in choice-flat form $\varphi$ is satisfiable under the $(\beta)$-semantics, and let $\model=(U,M)$ be a model for it, where now $\M \choice$ is a choice satisfying the axiom $(\beta)$. Then the model $\model$ satisfies all the following instances of the axiom $(\beta)$:
\begin{description}
\item[$(\beta)$\emph{-conditions}:] $\big(T_{i} \subseteq T_{j} \: \wedge \: \choice(T_{i}) \cap \choice(T_{j}) \neq \emptyset \big) \;\; \Longrightarrow \;\; \choice(T_{i}) \subseteq \choice(T_{j})$, ~~for $i,j = 1,\ldots,k$.
\end{description}

Let $\varphi_{\beta}$ be the $\BSTC^{-}$-formula obtained by adding the $(\beta)$-conditions to the $\BSTC^{-}$-$\beta$-reduction of $\varphi$, while regarding the choice terms in it just as set variables (with no internal structure).  We call the formula $\varphi_{\beta}$ the $\BSTC^{-}$-\emph{$\beta$-reduction of $\varphi$}. Notice that $|\varphi_{\beta}| = \mathcal{O}(|\varphi|^{2})$. In addition, $\varphi_{\beta}$ is plainly satisfiable.


Conversely, let $\varphi_{\beta}$ be satisfiable and let $\model_{\beta} = (U,M_{\beta})$ be a model for $\varphi_{\beta}$. Let $\Omega_{\beta} \defAs \{\Mbeta{T_{i}} : i = 1, \ldots,k\}$ and $c_{\beta}$ be a map over $\Omega_{\beta}$ such that $c_{\beta}(\Mbeta{T_{i}}) =  \Mbeta{(\choice(T_{i}))}$, for $i = 1, \ldots,k$. From the choice and single-valuedness conditions, $c_{\beta}$ is a choice over the domain $\Omega_{\beta}$. In addition, by the $\beta$-conditions present in $\varphi_{\beta}$, the choice $c_{\beta}$ satisfies the $(\beta)$-axiom. Hence, Theorem~\ref{THM:lifting beta} yields that $c_{\beta}$ has the $(\beta)$-lifting property, i.e., there is a total choice $\cPlus_{\beta} \colon \powPlus \rightrightarrows U$ extending $c_{\beta}$ and satisfying the $(\beta)$-axiom. Let $\model = (U,M)$ be the set assignment differing from $\model_{\beta}$ only on the interpretation of the choice symbol $\choice$, for which we have $\M \choice = \cPlus_{\beta}$. It is not hard to check that $\model \models_{\beta} \varphi$.

Thus, we have:
\begin{lemma}
A complete $\BSTC$-formula in choice-flat form is satisfiable under the $(\beta)$-semantics if and only if it admits a satisfiable $\BSTC^{-}$-$\beta$-reduction.\eod
\end{lemma}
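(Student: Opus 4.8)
The plan is to prove the two implications separately, in both cases routing through Theorem~\ref{THM:lifting beta} (the $(\beta)$-lifting theorem) to move between partial and total choices, and exploiting the auxiliary conjuncts carried by a \emph{complete} formula.

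For the ``only if'' direction, suppose $\varphi$ is satisfiable under the $(\beta)$-semantics, say by $\model = (U,M)$ with $\M\choice$ a total choice obeying axiom $(\beta)$. Since $(\beta)$ holds for $\M\choice$ on \emph{every} pair $A \subseteq B$, it holds in particular for the finitely many $(\beta)$-conditions $\bigl(T_i \subseteq T_j \wedge \choice(T_i) \cap \choice(T_j) \neq \emptyset\bigr) \implies \choice(T_i) \subseteq \choice(T_j)$, $i,j = 1,\ldots,k$. Hence the assignment that agrees with $M$ but reads each choice term $\choice(T_i)$ as the \emph{value} $\M{(\choice(T_i))}$, now regarded as an opaque set variable, satisfies $\varphi$ together with all $(\beta)$-conditions, i.e.\ it satisfies the $\BSTC^{-}$-$\beta$-reduction $\varphi_\beta$; the only thing worth noting here is that choice-flatness ensures $\choice$ is never nested, so a choice term and the set variable replacing it evaluate to the same set.

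For the ``if'' direction, suppose $\model_\beta = (U,M_\beta)$ satisfies $\varphi_\beta$. First I would set $\Omega_\beta := \{\Mbeta{T_1},\ldots,\Mbeta{T_k}\}$ and define $c_\beta$ on $\Omega_\beta$ by $c_\beta(\Mbeta{T_i}) := \Mbeta{(\choice(T_i))}$. The single-valuedness conditions make $c_\beta$ well defined, the choice conditions make it contractive and never empty-valued (so $c_\beta$ is a genuine partial choice on $\Omega_\beta$), and the $(\beta)$-conditions present in $\varphi_\beta$ make $c_\beta$ satisfy axiom $(\beta)$ on its domain. Theorem~\ref{THM:lifting beta} then produces a total choice $\cPlus_\beta \colon \powPlus \rightrightarrows U$ extending $c_\beta$ and still satisfying $(\beta)$. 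Finally I would take $\model = (U,M)$ to coincide with $\model_\beta$ on all variables and put $\M\choice := \cPlus_\beta$; since $\cPlus_\beta(\Mbeta{T_i}) = \Mbeta{(\choice(T_i))}$ for each $i$, an induction on term structure (again using choice-flatness) gives $\M T = \Mbeta T$ for every set term $T$ of $\varphi$, so $\model$ and $\model_\beta$ satisfy exactly the same atoms of $\varphi$; thus $\model \models \varphi$, and as $\M\choice$ obeys $(\beta)$ we get $\model \models_\beta \varphi$. The only delicate point I anticipate is this last passage from ``$\choice$ as an opaque set variable'' to ``$\choice$ as an actual choice correspondence'': one must check that $c_\beta$ is well defined, single-valued, contractive and never empty --- exactly what the choice and single-valuedness conjuncts of a complete formula supply --- and that re-interpreting $\choice$ as $\cPlus_\beta$ does not perturb the value of any term of $\varphi$, where choice-flatness is essential. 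Everything genuinely substantial is delegated to Theorem~\ref{THM:lifting beta}, already established.
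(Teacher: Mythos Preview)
Your proposal is correct and follows essentially the same route as the paper: in both directions you use the auxiliary conjuncts of a complete formula (choice, single-valuedness, and $(\beta)$-conditions) together with Theorem~\ref{THM:lifting beta} to pass between the partial choice $c_{\beta}$ on $\Omega_{\beta}$ and a total $(\beta)$-choice, exactly as the paper does. If anything, you spell out a bit more explicitly why choice-flatness guarantees that re-interpreting $\choice$ as $\cPlus_{\beta}$ leaves all term values unchanged, which the paper leaves as ``it is not hard to check.''
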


Since the size of the $\BSTC^{-}$-$\beta$-reduction of a given $\BSTC$-formula $\psi$ is at most quadratic in the size of $\psi$, in view of Theorem~\ref{NPcompl2LSS} we can conclude: 
\begin{theorem}
Under the $(\beta)$-semantics, the satisfiability problem for $\BSTC$-formulae is \NP-complete.\eod
\end{theorem}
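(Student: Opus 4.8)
The plan is to establish \NP-completeness of the $(\beta)$-semantics satisfiability problem for $\BSTC$-formulae by the same reduction-to-$\BSTC^{-}$ strategy that was just carried out for the unrestricted semantics, but now threading the $(\beta)$-lifting result (Theorem~\ref{THM:lifting beta}) through the argument. \NP-hardness is already granted, so only membership in \NP\ is at issue. First I would observe that, exactly as in the preceding subsections, it suffices to treat a complete $\BSTC$-formula $\varphi$ in choice-flat form, since the passage to this normal form preserves equisatisfiability and only blows up the size polynomially (in fact at most quadratically, owing to the single-valuedness conditions). Given such a $\varphi$ with choice terms $\choice(T_{1}),\ldots,\choice(T_{k})$, the key construction is the \emph{$\BSTC^{-}$-$\beta$-reduction} $\varphi_{\beta}$: regard every $\choice(T_{i})$ as a fresh set variable, and adjoin the $(\beta)$-conditions $(T_{i}\subseteq T_{j}\wedge\choice(T_{i})\cap\choice(T_{j})\neq\emptyset)\implies\choice(T_{i})\subseteq\choice(T_{j})$ for $i,j=1,\ldots,k$. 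Since there are $k^{2}=\mathcal{O}(|\varphi|^{2})$ such conjuncts, we have $|\varphi_{\beta}|=\mathcal{O}(|\varphi|^{2})$.

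The heart of the proof is the equisatisfiability claim: $\varphi$ is satisfiable under the $(\beta)$-semantics if and only if $\varphi_{\beta}$ is satisfiable (as an ordinary $\BSTC^{-}$-formula). The forward direction is routine: a $(\beta)$-model $\model=(U,M)$ for $\varphi$ interprets $\M\choice$ as a genuine choice satisfying $(\beta)$, hence in particular validates every $(\beta)$-condition instance, so the same set assignment (now reading the $\choice(T_{i})$ as independent set variables) witnesses satisfiability of $\varphi_{\beta}$. For the converse one starts from a $\BSTC^{-}$-model $\model_{\beta}=(U,M_{\beta})$ of $\varphi_{\beta}$, sets $\Omega_{\beta}\defAs\{\Mbeta{T_{i}}:i=1,\ldots,k\}$, and defines a partial map $c_{\beta}$ on $\Omega_{\beta}$ by $c_{\beta}(\Mbeta{T_{i}})\defAs\Mbeta{(\choice(T_{i}))}$. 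The choice conditions force $c_{\beta}$ to be nonempty-valued and contractive, the single-valuedness conditions make it well-defined despite possible coincidences $\Mbeta{T_{i}}=\Mbeta{T_{j}}$, and the $(\beta)$-conditions present in $\varphi_{\beta}$ are precisely what is needed to verify that the partial choice $c_{\beta}$ satisfies axiom $(\beta)$. Then Theorem~\ref{THM:lifting beta} applies verbatim: $c_{\beta}$ has the $(\beta)$-lifting property, so it extends to a total choice $\cPlus_{\beta}\colon\powPlus\rightrightarrows U$ satisfying $(\beta)$. Taking the set assignment $\model$ that agrees with $\model_{\beta}$ everywhere except on the choice symbol, where $\M\choice\defAs\cPlus_{\beta}$, one checks that $\model$ agrees with $\model_{\beta}$ on all choice-free terms, and on each $\choice(T_{i})$ one has $\M{(\choice(T_{i}))}=\cPlus_{\beta}(\Mbeta{T_{i}})=c_{\beta}(\Mbeta{T_{i}})=\Mbeta{(\choice(T_{i}))}$, hence $\model$ satisfies $\varphi$ under the $(\beta)$-semantics.

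Finally I would assemble the complexity bound. The reduction $\varphi\mapsto\varphi_{\beta}$ is computable in polynomial (indeed quadratic) time, and by Theorem~\ref{NPcompl2LSS} the satisfiability problem for $\BSTC^{-}$-formulae is in \NP; composing a polynomial-time many-one reduction with an \NP\ decision procedure keeps us in \NP. Together with the already-noted \NP-hardness (via the linear-time reduction from propositional satisfiability), this gives \NP-completeness under the $(\beta)$-semantics.

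The main obstacle — really the only non-bookkeeping point — is isolating exactly which syntactic conditions on $\varphi_{\beta}$ guarantee that the induced partial choice $c_{\beta}$ genuinely satisfies axiom $(\beta)$ as a \emph{semantic} statement about all submenu pairs in $\Omega_{\beta}$, rather than merely about the finitely many pairs $(\Mbeta{T_i},\Mbeta{T_j})$ named by the reduction. Here one uses that axiom $(\beta)$, being a universally quantified sentence, for a choice with \emph{finite} domain $\Omega_{\beta}$ reduces to finitely many instances, and those are precisely the $(\beta)$-conditions adjoined to $\varphi_{\beta}$; once this is observed the application of Theorem~\ref{THM:lifting beta} is immediate and the rest is the same routine model-surgery used for the unrestricted semantics.

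\begin{proof}
\NP-hardness follows, as already noted, from the linear-time reduction of propositional satisfiability. For membership in \NP\ it suffices, by the usual normalizations, to deal with a complete $\BSTC$-formula $\varphi$ in choice-flat form, with distinct choice terms $\choice(T_{1}),\ldots,\choice(T_{k})$.

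Let $\varphi_{\beta}$ be the $\BSTC^{-}$-$\beta$-reduction of $\varphi$, i.e., the $\BSTC^{-}$-formula obtained from $\varphi$ by treating each $\choice(T_{i})$ as a fresh set variable with no internal structure and adjoining the $(\beta)$-conditions $\big(T_{i}\subseteq T_{j}\wedge\choice(T_{i})\cap\choice(T_{j})\neq\emptyset\big)\implies\choice(T_{i})\subseteq\choice(T_{j})$ for $i,j=1,\ldots,k$. Since there are $k^{2}$ such conjuncts, $|\varphi_{\beta}|=\mathcal{O}(|\varphi|^{2})$, and the map $\varphi\mapsto\varphi_{\beta}$ is computable in polynomial time.

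\smallskip
\emph{Claim: $\varphi$ is satisfiable under the $(\beta)$-semantics if and only if $\varphi_{\beta}$ is satisfiable.}

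\smallskip
$(\Rightarrow)$ Let $\model=(U,M)$ satisfy $\varphi$ under the $(\beta)$-semantics, so $\M\choice$ is a total choice satisfying axiom $(\beta)$. Then $\model$ validates every instance of the $(\beta)$-conditions, hence, viewing each $\choice(T_{i})$ as an independent set variable interpreted by $\M{(\choice(T_{i}))}$, the same set assignment satisfies $\varphi_{\beta}$.

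\smallskip
$(\Leftarrow)$ Let $\model_{\beta}=(U,M_{\beta})$ satisfy $\varphi_{\beta}$. Put $\Omega_{\beta}\defAs\{\Mbeta{T_{i}}:i=1,\ldots,k\}$ and define $c_{\beta}$ on $\Omega_{\beta}$ by $c_{\beta}(\Mbeta{T_{i}})\defAs\Mbeta{(\choice(T_{i}))}$. By the single-valuedness conditions in $\varphi_{\beta}$, whenever $\Mbeta{T_{i}}=\Mbeta{T_{j}}$ we also have $\Mbeta{(\choice(T_{i}))}=\Mbeta{(\choice(T_{j}))}$, so $c_{\beta}$ is well-defined; by the choice conditions, $c_{\beta}$ is contractive and never empty-valued, hence a choice over the finite domain $\Omega_{\beta}$. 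Axiom $(\beta)$ is a universally quantified sentence over pairs of menus; for a choice with finite domain $\Omega_{\beta}$ it reduces to the finitely many instances indexed by pairs from $\Omega_{\beta}$, and these hold because $\model_{\beta}$ validates the $(\beta)$-conditions. Thus $c_{\beta}$ satisfies axiom $(\beta)$.

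By Theorem~\ref{THM:lifting beta}, $c_{\beta}$ has the $(\beta)$-lifting property: there is a total choice $\cPlus_{\beta}\colon\powPlus\rightrightarrows U$ extending $c_{\beta}$ and satisfying axiom $(\beta)$. Let $\model=(U,M)$ agree with $\model_{\beta}$ on every variable and put $\M\choice\defAs\cPlus_{\beta}$. Then $\model$ agrees with $\model_{\beta}$ on all choice-free terms, and for each $i$ we have $\M{(\choice(T_{i}))}=\cPlus_{\beta}(\Mbeta{T_{i}})=c_{\beta}(\Mbeta{T_{i}})=\Mbeta{(\choice(T_{i}))}$, so $\model$ agrees with $\model_{\beta}$ on all terms of $\varphi$. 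Hence $\model\models_{\beta}\varphi$. This proves the claim.

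\smallskip
Since $\varphi\mapsto\varphi_{\beta}$ is a polynomial-time many-one reduction to the satisfiability problem for $\BSTC^{-}$-formulae, which is in \NP\ by Theorem~\ref{NPcompl2LSS}, the satisfiability problem for $\BSTC$-formulae under the $(\beta)$-semantics is in \NP; together with \NP-hardness it is \NP-complete.
\end{proof}
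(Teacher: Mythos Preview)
Your proof is correct and follows essentially the same approach as the paper: both construct the $\BSTC^{-}$-$\beta$-reduction $\varphi_{\beta}$ by adjoining the $(\beta)$-conditions, establish equisatisfiability via Theorem~\ref{THM:lifting beta}, and conclude \NP-completeness from the quadratic size bound together with Theorem~\ref{NPcompl2LSS}. If anything, you are a bit more explicit than the paper in justifying why the finitely many adjoined $(\beta)$-conditions suffice to verify axiom $(\beta)$ for the partial choice $c_{\beta}$.
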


\subsubsection{$(\alpha)$-semantics}
Finally, we assume that our complete $\BSTC$-formula in choice-flat form $\varphi$ is satisfiable under the $(\alpha)$-semantics. Let $\model=(U,M)$ be a model for it, where now $\M \choice$ is a choice satisfying the axiom $(\alpha)$. Then, the model $\model$ satisfies also the following instances of the axiom $(\alpha)$:
\begin{description}
\item[$(\alpha)$\emph{-conditions}:] $T_{i} \subseteq T_{j}  \;\; \Longrightarrow \;\; T_{i} \cap \choice(T_{j}) \subseteq \choice(T_{j})$, ~~for $i,j = 1,\ldots,k$.
\end{description}
In addition, let $\Omega \defAs \{\M{T_{i}} : i = 1, \ldots, k\}$. Then, plainly, the partial choice correspondence $\M\choice\restrict{\Omega}$ over the choice domain $\Omega$ has the $(\alpha)$-lifting property. Hence, by Theorem~\ref{THM:lifting alpha}\ref{c}, for every $\emptyset \neq \mathcal{B} \subseteq \Omega$ such that $\mathcal{B}$ is $\subseteq$-closed w.r.t.\ $\Omega$, we have
\begin{equation}\label{nonEmptynessSem}
\bigcup \mathcal{B} \setminus \bigcup\nolimits_{B \in \mathcal{B}} \overline{c}(B) \neq \emptyset\,.
\end{equation}
In fact, (\ref{nonEmptynessSem}) holds for every $\emptyset \neq \mathcal{B} \subseteq \Omega$, irrespectively of whether $\mathcal{B}$ is $\subseteq$-closed w.r.t.\ $\Omega$ or not, as can be easily checked. Thus, $\model$ satisfies also the following further conditions:
\begin{description}
\item[\emph{nonemptiness conditions}:] $\bigcup_{i \in I} T_{i} \setminus \bigcup_{i \in I}(T_{i} \setminus \choice(T_{i}))  \neq \varnothing$, ~~for each $\emptyset \neq I \subseteq \{1,\ldots,k\}$.
\end{description}

Let $\varphi_{\alpha}$ be the $\BSTC^{-}$-formula obtained by adding the $(\alpha)$- and the nonemptiness conditions to the $\BSTC^{-}$-reduction of $\varphi$, while regarding the choice terms in it just as set variables (with no internal structure).  We call the formula $\varphi_{\alpha}$ the $\BSTC^{-}$-\emph{$\alpha$-reduction of $\varphi$}. Notice that $|\varphi_{\alpha}| = \mathcal(|\varphi|^{2} + k\cdot 2^{k})$. In addition, $\varphi_{\alpha}$ is plainly satisfiable.

%
%

Conversely, let us assume that $\varphi_{\alpha}$ is satisfiable and let $\model_{\alpha} = (U,M_{\alpha})$ be a model for it. Let $\Omega_{\alpha} \defAs \{\Malpha{T_{i}}  : i = 1, \ldots,k\}$ and $c_{\alpha}$ be a map over $\Omega_{\alpha}$ such that $c_{\alpha}(\Malpha{T_{i}}) =  \Malpha{X_{i}}$, for $i = 1, \ldots,k$. As before, thanks to the choice and the single-valuedness conditions, $c_{\alpha}$ is a choice over the domain $\Omega_{\alpha}$. In addition, from the $(\alpha)$- and the nonemptiness conditions, Theorem~\ref{THM:lifting alpha} yields that $c_{\alpha}$ has the $(\alpha)$-lifting property, i.e., there is a total choice $\cPlus_{\alpha} \colon \powPlus \rightrightarrows U$ extending $c_{\alpha}$ and satisfying the $(\alpha)$-axiom. Let $\model = (U,M)$ be the set assignment differing from $\model_{\alpha}$ only on the interpretation of the choice symbol $\choice$, for which we have $\M \choice = \cPlus_{\alpha}$. It is routine to check that $\model \models_{\alpha} \varphi$.

Hence, we have:
\begin{lemma}
Under the $(\alpha)$-semantics, a complete $\BSTC$-formula in choice-flat form is satisfiable if and only if it admits a satisfiable $\BSTC^{-}$-$\alpha$-reduction.\eod
\end{lemma}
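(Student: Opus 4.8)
The plan is to establish the two implications separately, using Theorem~\ref{THM:lifting alpha} as the bridge between the $(\alpha)$-semantics and the purely Boolean formula $\varphi_{\alpha}$, and exploiting throughout that $\varphi$ is complete and in choice-flat form (so every argument $T_{i}$ of a choice term is choice-free).

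For the forward implication, I would suppose $\model = (U,M)$ is a $\BSTC$-model of $\varphi$ whose interpreted choice map $\M\choice$ satisfies axiom $(\alpha)$. Since $\model$ is a genuine $\BSTC$-assignment it validates the choice and single-valuedness conditions; and since axiom $(\alpha)$ holds universally for $\M\choice$, its instances at the particular menus $\M{T_{1}},\dots,\M{T_{k}}$ --- i.e.\ the $(\alpha)$-conditions --- hold in $\model$ as well. The only point requiring work is that $\model$ also validates the nonemptiness conditions. Setting $\Omega \defAs \{\M{T_{i}} : 1 \le i \le k\}$, the partial choice $\M\choice\restrict{\Omega}$ trivially has the $(\alpha)$-lifting property, so Theorem~\ref{THM:lifting alpha}\ref{c} applies. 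I would then upgrade that conclusion from $\subseteq$-closed families to \emph{arbitrary} nonempty $\mathcal{B}\subseteq\Omega$: given such a $\mathcal{B}$, pass to its $\subseteq$-closure $\mathcal{B}' \defAs \{B\in\Omega : B\subseteq\bigcup\mathcal{B}\}$, observe $\bigcup\mathcal{B}' = \bigcup\mathcal{B}$ and $\bigcup_{B\in\mathcal{B}}\oc(B)\subseteq\bigcup_{B\in\mathcal{B}'}\oc(B)$, and conclude $\bigcup\mathcal{B}\setminus\bigcup_{B\in\mathcal{B}}\oc(B) \supseteq \bigcup\mathcal{B}'\setminus\bigcup_{B\in\mathcal{B}'}\oc(B) \neq\emptyset$. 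Since each nonemptiness condition is exactly this statement for $\mathcal{B}=\{\M{T_{i}} : i\in I\}$, all of them hold; hence, reading the choice terms of $\varphi$ as ordinary set variables, $\model$ satisfies the $\BSTC^{-}$-reduction of $\varphi$ together with the $(\alpha)$- and nonemptiness conditions, i.e.\ it satisfies $\varphi_{\alpha}$.

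For the converse, I would let $\model_{\alpha} = (U,M_{\alpha})$ be a model of $\varphi_{\alpha}$, write $X_{i}$ for the set variable that replaced $\choice(T_{i})$, put $\Omega_{\alpha} \defAs \{\Malpha{T_{i}} : 1\le i\le k\}$, and define $c_{\alpha}$ on $\Omega_{\alpha}$ by $c_{\alpha}(\Malpha{T_{i}}) \defAs \Malpha{X_{i}}$; the single-valuedness conditions make this well defined and the choice conditions make $c_{\alpha}$ a choice. The $(\alpha)$-conditions say precisely that $c_{\alpha}$ satisfies condition \ref{b} of Theorem~\ref{THM:lifting alpha}, while the nonemptiness conditions give $\bigcup\mathcal{B}\setminus\bigcup_{B\in\mathcal{B}}\oc_{\alpha}(B)\neq\emptyset$ for \emph{every} nonempty $\mathcal{B}\subseteq\Omega_{\alpha}$, in particular for the $\subseteq$-closed ones, which is condition \ref{c}. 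So Theorem~\ref{THM:lifting alpha} yields a total $(\alpha)$-choice $\cPlus_{\alpha}$ on $U$ extending $c_{\alpha}$. Finally, let $\model$ agree with $\model_{\alpha}$ except that $\M\choice \defAs \cPlus_{\alpha}$. A straightforward induction on terms --- using choice-flat form to avoid nesting --- shows $\model$ and $\model_{\alpha}$ assign the same value to every set term of $\varphi$: choice-free terms are untouched, and $\M{(\choice(T_{i}))} = \cPlus_{\alpha}(\Malpha{T_{i}}) = c_{\alpha}(\Malpha{T_{i}}) = \Malpha{X_{i}}$, which is the value $\model_{\alpha}$ assigns the corresponding term of the reduction. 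Hence $\model$ satisfies exactly the atoms that $\model_{\alpha}$ does, so $\model\models\varphi$, and as $\cPlus_{\alpha}$ obeys $(\alpha)$ we obtain $\model\models_{\alpha}\varphi$.

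The atom-bookkeeping and the well-definedness checks are routine; the one genuinely delicate point, appearing in both directions, is reconciling the syntactic nonemptiness conditions --- which quantify over \emph{all} nonempty index sets $I$ --- with Theorem~\ref{THM:lifting alpha}\ref{c}, which quantifies only over $\subseteq$-closed families: the $\subseteq$-closure argument above is what makes this over-generated conjunction both sound (forward) and still sufficient (converse). As a side remark, $\varphi_{\alpha}$ contains $2^{k}-1$ nonemptiness conjuncts, so $|\varphi_{\alpha}| = \mathcal{O}(|\varphi|^{2} + k\cdot 2^{k})$; this is irrelevant to the present equisatisfiability statement but is the source of the exponential growth that will later force the restriction to a constant number of choice terms.
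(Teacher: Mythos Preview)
Your proposal is correct and follows essentially the same route as the paper: both directions hinge on Theorem~\ref{THM:lifting alpha}, with the choice and single-valuedness conditions ensuring $c_{\alpha}$ is a well-defined choice, the $(\alpha)$-conditions giving~\ref{b}, and the nonemptiness conditions giving~\ref{c}. Your explicit $\subseteq$-closure argument for upgrading Theorem~\ref{THM:lifting alpha}\ref{c} from closed families to arbitrary ones is exactly what the paper hides behind the phrase ``as can be easily checked,'' and your term-induction remark for the converse is what the paper calls ``routine.''
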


As observed earlier, the size of $\varphi_{\alpha}$ is $\mathcal{O}(|\varphi|^{2} + k\cdot 2^{k})$. Thus, in general, the satisfiability problem for $\BSTC$-formulae under the $(\alpha)$-semantics is in \NEXP. However, if the number of distinct choice terms is restricted to be $\mathcal{O}(1)$, we have $|\varphi_{\alpha}| = \mathcal{O}(|\varphi|^{2})$ and therefore, in view of Theorem~\ref{NPcompl2LSS}, we have:
\begin{theorem}
Under the $(\alpha)$-semantics, the satisfiability problem for $\BSTC$-formulae with $\mathcal{O}(1)$ distinct choice terms is \NP-complete.\eod
\end{theorem}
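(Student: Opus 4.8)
The plan is to leverage the preceding lemma together with the already-established Theorem~\ref{NPcompl2LSS}. The lemma just proved states that a complete $\BSTC$-formula $\varphi$ in choice-flat form is satisfiable under the $(\alpha)$-semantics if and only if its $\BSTC^{-}$-$\alpha$-reduction $\varphi_{\alpha}$ is satisfiable. Since $\varphi_{\alpha}$ is an ordinary $\BSTC^{-}$-formula, its satisfiability is decidable in \NP\ by Theorem~\ref{NPcompl2LSS}. So the only thing to verify is that the map $\varphi \mapsto \varphi_{\alpha}$ can be computed in polynomial time, which amounts to controlling the size $|\varphi_{\alpha}|$.

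The key steps, in order, are: (1) reduce a given $\BSTC$-formula to a complete choice-flat form — this was shown earlier to incur at most a quadratic blow-up and to preserve the number of distinct choice terms, so with $k = \mathcal{O}(1)$ it is a polynomial-time preprocessing step; (2) form the $\BSTC^{-}$-reduction by treating each choice term $\choice(T_i)$ as a fresh set variable $X_i$; (3) append the $(\alpha)$-conditions $T_i \subseteq T_j \implies T_i \cap \choice(T_j) \subseteq \choice(T_j)$ for $i,j = 1,\ldots,k$, contributing $\mathcal{O}(k^2) = \mathcal{O}(1)$ extra conjuncts; (4) append the nonemptiness conditions $\bigcup_{i \in I} T_i \setminus \bigcup_{i \in I}(T_i \setminus \choice(T_i)) \neq \varnothing$ for each $\emptyset \neq I \subseteq \{1,\ldots,k\}$, contributing $\mathcal{O}(k \cdot 2^k)$ symbols. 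With $k = \mathcal{O}(1)$ this last term is a constant, so overall $|\varphi_{\alpha}| = \mathcal{O}(|\varphi|^2)$, and the reduction is polynomial-time; (5) invoke the lemma for correctness and Theorem~\ref{NPcompl2LSS} for the \NP\ upper bound; (6) note \NP-hardness is inherited from propositional satisfiability exactly as remarked at the start of Section~\ref{appendixDecProc} (the reduction being linear), which gives \NP-completeness.

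The only genuine subtlety — and the step I would flag as the main obstacle — is the size accounting in step (4): the nonemptiness conditions range over all nonempty subsets $I$ of $\{1,\ldots,k\}$, which is $2^k - 1$ conditions, and each term $\choice(T_i)$ has been flattened to $X_i$ but the $T_i$ themselves may be large subterms of $\varphi$. One must therefore check that each nonemptiness conjunct has size $\mathcal{O}(k \cdot \max_i |T_i|) = \mathcal{O}(k \cdot |\varphi|)$ and that there are $2^k$ of them, giving the $\mathcal{O}(|\varphi|^2 + k \cdot 2^k)$ bound stated for $|\varphi_{\alpha}|$ in general; the restriction $k = \mathcal{O}(1)$ then collapses $k \cdot 2^k$ to a constant and absorbs it, leaving $\mathcal{O}(|\varphi|^2)$. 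Everything else is routine: the correctness of the reduction is already packaged in the preceding lemma, and the \NP\ membership and hardness are immediate from Theorem~\ref{NPcompl2LSS} and the propositional-logic reduction, respectively.
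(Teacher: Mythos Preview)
Your proposal is correct and follows essentially the same approach as the paper: reduce to the $\BSTC^{-}$-$\alpha$-reduction via the preceding lemma, bound $|\varphi_{\alpha}|$ by $\mathcal{O}(|\varphi|^{2} + k\cdot 2^{k})$, observe that $k=\mathcal{O}(1)$ collapses this to $\mathcal{O}(|\varphi|^{2})$, and invoke Theorem~\ref{NPcompl2LSS} for membership in \NP\ together with the propositional-SAT reduction for hardness. One small arithmetic slip: your own calculation (each of the $2^{k}$ nonemptiness conjuncts having size $\mathcal{O}(k\cdot|\varphi|)$) actually gives $\mathcal{O}(k\cdot 2^{k}\cdot|\varphi|)$ rather than the $\mathcal{O}(k\cdot 2^{k})$ you quote, but this is harmless since under $k=\mathcal{O}(1)$ it is still $\mathcal{O}(|\varphi|)$ and absorbed into $\mathcal{O}(|\varphi|^{2})$.
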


\section{Conclusions} \label{SECT:Conclusions}

In this paper we have initiated the study of the satisfiability problem for unquantified formulae of an elementary fragment of set theory enriched with a choice correspondence symbol. 
Apart from the obvious theoretical reasons that motivate this approach, our analysis has its roots in applications within the field of social and individual choice theory.
In fact, the satisfiability tests implicit in our results naturally yield an effective way of checking whether the observed choice behavior of an economic agent is induced by an underlying rationality on the set of alternatives.

Future research on the topic is related to the extension of the current approach to a more general setting. 
In this direction, it is natural to examine the satisfiability problem for semantics characterized by other types of axioms of choice consistency, which are connected to rationalizability issues.
We are currently studying the lifting (and the associated satisfiability problem) of several combinations of axioms, namely, $(\alpha)$ adjoined with $(\gamma)$ and/or $(\rho)$ (see~\cite{CanGiaWat17}). 
The motivation of this analysis is that the satisfaction of these axioms is connected to the (quasi-transitive) rationalizability of a choice. 
More generally, it appears natural to examine the lifting of $(m,n)$\textit{-Ferrers properties} in the sense of \cite{GiaWat14} (see also~\cite{GiaWat17}).
In fact, these properties give rise to additional types of rationalizability -- the so-called $(m,n)$\textit{-rationalizability} -- in which the relation of revealed preference satisfies structural forms of pseudo-transitivity (see~\cite{CanGiaGreWat16}).

\newcommand{\TLQSTR}{\ensuremath{\mbox{$\mathrm{3LQST_{0}^{R}}$}}\xspace}
\newcommand{\QLQSR}{\ensuremath{\mbox{$4\mathrm{LQS}^{R}$}}\xspace}
We also intend to find decidable extensions with choice correspondence terms of the three-sorted fragment of set theory $\TLQSTR$ (see \cite{CN16}) and of the four-sorted fragments $(\QLQSR)^h$, with $h \in \mathbb{N}$ (see \cite{CanNic13a}), which admit a restricted form of quantification over individual and set variables (in the case of $\TLQSTR$), and also over collection variables (in the case of $(\QLQSR)^h$. The resulting decision procedures would allow to reason automatically on very expressive properties in choice theory. 

\section*{Acknowledgements}
Thanks are due to the referees for their useful comments.

%
%
%
%

\bibliographystyle{eptcs}

\begin{thebibliography}{10}
\providecommand{\bibitemdeclare}[2]{}
\providecommand{\surnamestart}{}
\providecommand{\surnameend}{}
\providecommand{\urlprefix}{Available at }
\providecommand{\url}[1]{\texttt{#1}}
\providecommand{\href}[2]{\texttt{#2}}
\providecommand{\urlalt}[2]{\href{#1}{#2}}
\providecommand{\doi}[1]{doi:\urlalt{http://dx.doi.org/#1}{#1}}
\providecommand{\bibinfo}[2]{#2}

\bibitemdeclare{book}{AleBouMon07}
\bibitem{AleBouMon07}
\bibinfo{author}{F.~\surnamestart Aleskerov\surnameend},
  \bibinfo{author}{D.~\surnamestart Bouyssou\surnameend} \&
  \bibinfo{author}{Monjardet \surnamestart B.\surnameend}
  (\bibinfo{year}{2007}): \emph{\bibinfo{title}{Utility Maximization, Choice
  and Preference}}.
\newblock \bibinfo{publisher}{Springer-Verlag}, \bibinfo{address}{Berlin},
  \doi{10.1007/978-3-540-34183-3}.

\bibitemdeclare{article}{Arr59}
\bibitem{Arr59}
\bibinfo{author}{K.~J. \surnamestart Arrow\surnameend} (\bibinfo{year}{1959}):
  \emph{\bibinfo{title}{Rational choice functions and orderings}}.
\newblock {\sl \bibinfo{journal}{Economica}} \bibinfo{volume}{26}, pp.
  \bibinfo{pages}{121--127}, \doi{10.2307/2550390}.

\bibitemdeclare{book}{CanFerOmo89a}
\bibitem{CanFerOmo89a}
\bibinfo{author}{D.~\surnamestart Cantone\surnameend},
  \bibinfo{author}{A.~\surnamestart Ferro\surnameend} \& \bibinfo{author}{E.~G.
  \surnamestart Omodeo\surnameend} (\bibinfo{year}{1989}):
  \emph{\bibinfo{title}{Computable Set Theory}}.
\newblock {\sl \bibinfo{series}{International Series of Monographs on Computer
  Science, Oxford Science Publications}}~\bibinfo{volume}{6},
  \bibinfo{publisher}{Clarendon Press}, \bibinfo{address}{Oxford, {UK}}.

\bibitemdeclare{article}{CanGiaGreWat16}
\bibitem{CanGiaGreWat16}
\bibinfo{author}{D.~\surnamestart Cantone\surnameend},
  \bibinfo{author}{A.~\surnamestart Giarlotta\surnameend},
  \bibinfo{author}{S.~\surnamestart Greco\surnameend} \&
  \bibinfo{author}{S.~\surnamestart Watson\surnameend} (\bibinfo{year}{2016}):
  \emph{\bibinfo{title}{$(m,n)$-rationalizable choices}}.
\newblock {\sl \bibinfo{journal}{Journal of Mathematical Psychology}}
  \bibinfo{volume}{73}, pp. \bibinfo{pages}{12--27},
  \doi{10.1016/j.jmp.2015.12.006}.

\bibitemdeclare{article}{CanGiaWat17}
\bibitem{CanGiaWat17}
\bibinfo{author}{D.~\surnamestart Cantone\surnameend},
  \bibinfo{author}{A.~\surnamestart Giarlotta\surnameend} \&
  \bibinfo{author}{S.~\surnamestart Watson\surnameend} (\bibinfo{year}{2017}):
  \emph{\bibinfo{title}{Lifting choices to a total domain}}.
\newblock {\sl \bibinfo{journal}{Mimeo, University of Catania}}.

\bibitemdeclare{article}{CanNic13a}
\bibitem{CanNic13a}
\bibinfo{author}{D.~\surnamestart Cantone\surnameend} \&
  \bibinfo{author}{M.~\surnamestart Nicolosi~Asmundo\surnameend}
  (\bibinfo{year}{2013}): \emph{\bibinfo{title}{On the satisfiability problem
  for a 4-level quantified syllogistic and some applications to modal logic}}.
\newblock {\sl \bibinfo{journal}{Fundamenta Informaticae}}
  \bibinfo{volume}{124}(\bibinfo{number}{4}), pp. \bibinfo{pages}{427--448},
  \doi{10.3233/FI-2013-842}.

\bibitemdeclare{article}{CN16}
\bibitem{CN16}
\bibinfo{author}{D.~\surnamestart Cantone\surnameend} \&
  \bibinfo{author}{M.~\surnamestart Nicolosi~Asmundo\surnameend}
  (\bibinfo{year}{2016}): \emph{\bibinfo{title}{The Decision Problem for a
  Three-sorted Fragment of Set Theory with Restricted Quantification and Finite
  Enumerations}}.
\newblock {\sl \bibinfo{journal}{Electronic Notes in Theoretical Computer
  Science}} \bibinfo{volume}{322}(\bibinfo{number}{69--86}),
  \doi{10.1016/j.entcs.2016.03.006}.

\bibitemdeclare{book}{CanOmoPol01}
\bibitem{CanOmoPol01}
\bibinfo{author}{D.~\surnamestart Cantone\surnameend}, \bibinfo{author}{E.~G.
  \surnamestart Omodeo\surnameend} \& \bibinfo{author}{A.~\surnamestart
  Policriti\surnameend} (\bibinfo{year}{2001}): \emph{\bibinfo{title}{Set
  Theory for Computing - {\small From Decision Procedures to Declarative
  Programming with Sets}}}.
\newblock \bibinfo{series}{Monographs in Computer Science},
  \bibinfo{publisher}{Springer-Verlag}, \bibinfo{address}{New York},
  \doi{10.1007/978-1-4757-3452-2}.

\bibitemdeclare{book}{CanUrs17}
\bibitem{CanUrs17}
\bibinfo{author}{D.~\surnamestart Cantone\surnameend} \&
  \bibinfo{author}{P.~\surnamestart Ursino\surnameend} (\bibinfo{year}{2017}):
  \emph{\bibinfo{title}{An Introduction to the Formative Processes Technique in
  Set Theory}}.
\newblock \bibinfo{publisher}{Springer-Verlag}.
\newblock \bibinfo{note}{To appear.}

\bibitemdeclare{article}{ChaEchShm17}
\bibitem{ChaEchShm17}
\bibinfo{author}{C.~P. \surnamestart Chambers\surnameend},
  \bibinfo{author}{F.~\surnamestart Echenique\surnameend} \&
  \bibinfo{author}{E.~\surnamestart Shmaya\surnameend} (\bibinfo{year}{2017}):
  \emph{\bibinfo{title}{General revealed preference theory}}.
\newblock {\sl \bibinfo{journal}{Theoretical Economics}} \bibinfo{volume}{73},
  pp. \bibinfo{pages}{493--511}, \doi{10.3982/TE1924}.

\bibitemdeclare{article}{Che54}
\bibitem{Che54}
\bibinfo{author}{H.~\surnamestart Chernoff\surnameend} (\bibinfo{year}{1954}):
  \emph{\bibinfo{title}{Rational selection of decision functions}}.
\newblock {\sl \bibinfo{journal}{Econometrica}} \bibinfo{volume}{22}, pp.
  \bibinfo{pages}{422--443}, \doi{10.2307/1907435}.

\bibitemdeclare{article}{GiaWat14}
\bibitem{GiaWat14}
\bibinfo{author}{A.~\surnamestart Giarlotta\surnameend} \&
  \bibinfo{author}{S.~\surnamestart Watson\surnameend} (\bibinfo{year}{2014}):
  \emph{\bibinfo{title}{The pseudo-transitivity of preference relations: Strict
  and weak $(m, n)$-Ferrers properties}}.
\newblock {\sl \bibinfo{journal}{Journal of Mathematical Psychology}}
  \bibinfo{volume}{58}, pp. \bibinfo{pages}{45--54},
  \doi{10.1016/j.jmp.2014.01.002}.

\bibitemdeclare{article}{GiaWat17}
\bibitem{GiaWat17}
\bibinfo{author}{A.~\surnamestart Giarlotta\surnameend} \&
  \bibinfo{author}{S.~\surnamestart Watson\surnameend} (\bibinfo{year}{2017}):
  \emph{\bibinfo{title}{Strict $(m,1)$-{F}errers properties}}.
\newblock {\sl \bibinfo{journal}{Mimeo, University of Catania}}.

\bibitemdeclare{article}{Han68}
\bibitem{Han68}
\bibinfo{author}{B.~\surnamestart Hansson\surnameend} (\bibinfo{year}{1968}):
  \emph{\bibinfo{title}{Choice structures and preference relations}}.
\newblock {\sl \bibinfo{journal}{Synthese}} \bibinfo{volume}{18}, pp.
  \bibinfo{pages}{443--458}, \doi{10.1007/BF00484979}.

\bibitemdeclare{article}{Hou50}
\bibitem{Hou50}
\bibinfo{author}{H.~S. \surnamestart Houthakker\surnameend}
  (\bibinfo{year}{1950}): \emph{\bibinfo{title}{Revealed preference and the
  utility function}}.
\newblock {\sl \bibinfo{journal}{Economica}} \bibinfo{volume}{17}, pp.
  \bibinfo{pages}{159--174}, \doi{10.2307/2549382}.

\bibitemdeclare{article}{Ric66}
\bibitem{Ric66}
\bibinfo{author}{M.~K. \surnamestart Richter\surnameend}
  (\bibinfo{year}{1966}): \emph{\bibinfo{title}{Revealed preference theory}}.
\newblock {\sl \bibinfo{journal}{Econometrica}} \bibinfo{volume}{34}, pp.
  \bibinfo{pages}{635--645}, \doi{10.2307/1909773}.

\bibitemdeclare{article}{Sam38}
\bibitem{Sam38}
\bibinfo{author}{P.~\surnamestart Samuelson\surnameend} (\bibinfo{year}{1938}):
  \emph{\bibinfo{title}{A note on the pure theory of consumer's behavior}}.
\newblock {\sl \bibinfo{journal}{Economica}} \bibinfo{volume}{5}, pp.
  \bibinfo{pages}{61--71}, \doi{10.2307/2548836}.

\bibitemdeclare{book}{SchCanOmo11}
\bibitem{SchCanOmo11}
\bibinfo{author}{J.~T. \surnamestart Schwartz\surnameend},
  \bibinfo{author}{D.~\surnamestart Cantone\surnameend} \&
  \bibinfo{author}{E.~G. \surnamestart Omodeo\surnameend}
  (\bibinfo{year}{2011}): \emph{\bibinfo{title}{Computational Logic and Set
  Theory: Applying Formalized Logic to Analysis}}.
\newblock \bibinfo{publisher}{Springer-Verlag}, \doi{10.1007/978-0-85729-808-9}.
\newblock \bibinfo{note}{Foreword by M. Davis}.

\bibitemdeclare{article}{Sen71}
\bibitem{Sen71}
\bibinfo{author}{A.~\surnamestart Sen\surnameend} (\bibinfo{year}{1971}):
  \emph{\bibinfo{title}{Choice functions and revealed preferences}}.
\newblock {\sl \bibinfo{journal}{Review of Economic Studies}}
  \bibinfo{volume}{38}, pp. \bibinfo{pages}{307--317}, \doi{10.2307/2296384}.

\end{thebibliography}

\end{document}